\renewcommand{\backref}[1]{}
\renewcommand{\backrefalt}[4]{%
\ifcase #1 %
\or
[p.\ #2]%
\else
[pp.\ #2]%
\fi}
\Crefname{figure}{Figure}{Figures}
\newcolumntype{N}{>{$\displaystyle}l<{$}} 
\newtheorem{theorem}{Theorem}
\newtheorem{lemma}[theorem]{Lemma}
\newtheorem{corollary}[theorem]{Corollary}%
\theoremstyle{definition}
\newtheorem{definition}[theorem]{Definition}
\def\wt{\mathrm{wt}}
\def\supp{\mathrm{supp}}
\def\iverson#1{\left[#1\right]}
\def\iversontext#1{\left[\mbox{\normalfont #1}\right]}
\def\e{\mathrm{e}}
\def\i{\mathrm{i}}
\def\ket#1{|#1\rangle}
\def\bra#1{\langle#1|}
\def\braket#1#2{\langle#1|#2\rangle}
\setlist[enumerate]{label=\alph*)}
\newcommand*\circled[1]{\tikz[baseline=(char.base)]{
            \node[shape=circle,draw,inner sep=2pt] (char) {#1};}}
\title{\textbf{Quantum Volunteer's Dilemma}}
\author[1,2,3]{Dax Enshan Koh\,\orcidlink{0000-0002-8968-591X}\thanks{\href{mailto:dax_koh@ihpc.a-star.edu.sg}{dax\_koh@ihpc.a-star.edu.sg}
}}
\author[3,4]{Kaavya Kumar\,\orcidlink{0009-0009-2467-5137}\thanks{\href{mailto:kaavya.kumars@gmail.com}{kaavya.kumars@gmail.com}}}
\author[1,3,5]{Siong Thye Goh\,\orcidlink{0000-0001-7563-0961}\thanks{\href{mailto:gohst2@ihpc.a-star.edu.sg}{gohst2@ihpc.a-star.edu.sg}}}
\affil[1]{\small A*STAR Quantum Innovation Centre (Q.InC), Institute of High Performance Computing (IHPC), Agency for Science, Technology and Research (A*STAR), 1 Fusionopolis Way, \#16-16 Connexis, Singapore 138632, Singapore}
\affil[2]{\small Science, Mathematics and Technology Cluster, Singapore University of Technology and Design, 8 Somapah Road, Singapore 487372, Singapore}
\affil[3]{\small Institute of High Performance Computing (IHPC), Agency for Science, Technology and Research (A*STAR), 1 Fusionopolis Way, \#16-16 Connexis, Singapore 138632, Singapore}
\affil[4]{\small United World College South East Asia East Campus, 1 Tampines Street 73, Singapore 528704, Singapore
}
\affil[5]{\small Singapore Management  University, 81 Victoria St, Singapore 188065, Singapore}
\date{}
\begin{document}

\maketitle

\begin{abstract}
The volunteer's dilemma is a well-known game in game theory that models the conflict players face when deciding whether to volunteer for a collective benefit, knowing that volunteering incurs a personal cost. In this work, we introduce a quantum variant of the classical volunteer’s dilemma, generalizing it by allowing players to utilize quantum strategies. Employing the Eisert–Wilkens–Lewenstein quantization framework, we analyze a multiplayer quantum volunteer's dilemma scenario with an arbitrary number of players, where the cost of volunteering is shared equally among the volunteers. We derive analytical expressions for the players' expected payoffs and demonstrate the quantum game's advantage over the classical game. In particular, we prove that the quantum volunteer's dilemma possesses symmetric Nash equilibria with larger expected payoffs compared to the unique symmetric Nash equilibrium of the classical game, wherein players use mixed strategies. Furthermore, we show that the quantum Nash equilibria we identify are Pareto optimal. Our findings reveal distinct dynamics in volunteer's dilemma scenarios when players adhere to quantum rules, underscoring a strategic advantage of decision-making in quantum settings.
\\~\\
\noindent\textbf{Keywords:} Volunteer's dilemma $\cdot$ Symmetric volunteer's dilemma with cost sharing $\cdot$ Quantum game theory $\cdot$ Eisert–Wilkens–Lewenstein scheme $\cdot$ Quantum strategies $\cdot$ Nash equilibrium $\cdot$ Pareto optimality
\end{abstract}

\section{Introduction}

The volunteer's dilemma, first introduced by Diekmann in 1985 \cite{diekmann1985volunteers}, is a famous $n$-player game in game theory, wherein players must decide whether or not to volunteer and incur a personal cost for the benefit of the entire group. This dilemma encapsulates numerous real-world social traps \cite{diekmann1986volunteers} and explains various human behaviors \cite{archetti2009volunteers, archetti2009cooperation,lee2009rational,krueger2019vexing,heine2021self}, such as the bystander effect \cite{latane1970unresponsive,panchanathan2013bystander,tutic2014procedurally,thomas2016recursive,kwak2020modeling,mercade2021volunteers}, free-riding \cite{otsubo2008dynamic,hilbe2014cooperation,dineen2021formal}, diffusion of responsibility \cite{darley1968bystander,barron2002private,goeree2017experimental,przepiorka2018heterogeneous}, and the NIMBY (`not in my backyard') syndrome \cite{wolsink1990siting, wolsink1994entanglement}. The dilemma also extends beyond human interactions to non-human animal behaviors such as the transmission of predator information \cite{ archetti2009volunteers, searcy2010evolution, archetti2011strategy,mielke2019snake, steinegger2020laboratory,heifetz2021arabian,van2022male,broom2022game,padget2023guppies}. Animals that volunteer to transmit such information may inadvertently alert the predator and put themselves at risk to benefit the group, as seen in species like sooty mangabeys \cite{mielke2019snake} and guppies \cite{padget2023guppies}. This dilemma also occurs in cancer cells \cite{morsky2018cheater,archetti2019cooperation,manini2022ecology} and microbial populations \cite{pedroso2018impact,patel2019crystal}. For instance, during toxin production in bacteria, each bacterium faces a decision: to either pay the metabolic cost of producing a toxin to reach the necessary threshold for successful invasion of the host, or rely on another bacterium to bear the cost \cite{patel2019crystal}.

Over the last few decades, extensive research has been conducted on the volunteer's dilemma, with theoretical studies characterizing various properties of the game \cite{weesie1994incomplete,tutic2014procedurally,konrad2021volunteers}, 
and experimental investigations exploring the factors influencing volunteering choices and how closely actual behavior aligns with theoretical predictions \cite{diekmann1986volunteers,rapoport1988experiments, murnighan1993volunteer,franzen1994group, wook1997effects,otsubo2008dynamic,przepiorka2013individual,franzen2013volunteer,diekmann2015take,krueger2016expectations,goeree2017experimental,healy2018cost,peuker2019yes,przepiorka2021emergence,villiger2023role}. Various versions of the dilemma have been studied, including Diekmann's original symmetric volunteer's dilemma~\cite{diekmann1985volunteers}, where volunteers all incur the same cost and players all receive the same benefit; the asymmetric volunteer's dilemma \cite{diekmann1993cooperation,weesie1993asymmetry,he2014evolutionary,healy2018cost,guo2023asymmetric}, which allows volunteers to incur different costs and players to receive different benefits; the threshold volunteer's dilemma \cite{chen2013shared,mago2023greed}, which requires a threshold number of volunteers for producing the benefit; and the volunteer's timing dilemma \cite{weesie1993asymmetry,weesie1994incomplete}, where players can observe each other's actions and wait for someone else to volunteer, potentially leading to a state of mamihlapinatapai \cite{dwyer2000mamihlapinatapai}.

One significant volunteer's dilemma variant, introduced by Weesie and Franzen in 1998 \cite{weesie1998cost}, involves cost-sharing among volunteers. In this variant, the cost incurred by the volunteers is divided amongst them, making it more applicable to many real-life scenarios \cite{weesie1998cost,amir2024volunteer}. For example, when countries share a polluted water body, they all benefit if one country decides to bear the cleaning costs; however, the burden can be reduced if multiple countries share the expense \cite{ni2007sharing,dong2012sharing}. Another scenario involves saving a drowning child, where a group rescue is less risky than an individual effort \cite{amir2024volunteer}.

In this study, we introduce a volunteer's dilemma setting governed by the laws of quantum mechanics, where the outcomes of a quantum measurement performed on an entangled quantum system determine whether players volunteer, with each player having the ability to manipulate their local part of the system before the measurement. While there are various ways to realize such a quantum setting \cite{marinatto2000quantum, nawaz2004generalized}, we choose the well-known quantization framework established in quantum game theory by Eisert, Wilkens, and Lewenstein in 1999 \cite{eisert1999quantum}. At the beginning of the game, $n$ players are provided with an entangled $n$-qubit state, with each player holding one qubit. Players then independently decide which quantum operation from a designated set to perform on their respective systems. After players manipulate these systems, a collective measurement of the joint quantum state is performed, revealing whether each player volunteers or not.

For the payoff structure, we use the cost-sharing variant of the volunteer's dilemma à la Weesie and Franzen \cite{weesie1998cost}: if nobody volunteers, all players receive a payoff of 0; if $k>0$ players volunteer, each non-volunteer receives a payoff of $2$ and each volunteer receives a payoff of $2-1/k$. In other words, 2 units are awarded if at least one player volunteers, and the overall cost of volunteering, which is 1 unit, is shared among all the volunteers. The expected payoff of each player is then given by the sum of each possible payoff value, multiplied by its corresponding probability, where the probability is determined by Born's rule \cite{nielsen2010quantum}, which calculates the probability of each measurement outcome based on the pre-measurement joint quantum state of the system.

We assume that the players in our quantum volunteer's dilemma are rational and self-interested \cite{cudd1993game}. Players' decisions are the result of maximizing their own selfish payoff functions, conditional on their beliefs about the other players' optimal behaviors. This could lead to decisions that result in Nash equilibria, from which no player can increase their own payoff unilaterally \cite{neumann2004theory}. The main results of our work are threefold: first, we derive analytical expressions for the players’ expected payoffs, showing that they can be written in terms of sums of products of trigonometric functions. Second, using these analytical results, we prove that our quantum game reveals symmetric Nash equilibria with higher expected payoffs compared to the unique symmetric Nash equilibrium \cite{weesie1998cost} found in the classical game, where players employ mixed strategies. Third, we establish that these symmetric Nash equilibria are Pareto optimal, meaning that no player can improve their payoff without reducing the payoff of another player \cite{pardalos2008pareto}. These features of our quantum game make it both attractive and intriguing, offering insights into how quantum strategies can fundamentally alter the dynamics and outcomes of social dilemmas. 

The rest of our paper is structured as follows. In \cref{sec:related_work}, we survey some related work in classical and quantum game theory, emphasizing their significant progress in recent decades and their relevance to our study. In \cref{sec:preliminaries}, we introduce various game-theoretic concepts that are applicable to both classical and quantum game theory. We then review some results in the classical volunteer's dilemma, both with pure and mixed strategies. In \cref{sec:quantum_volunteer's_dilemma}, we define our quantum volunteer's dilemma game and derive explicit expressions for the payoff functions of the players. We then analyze various quantum strategies and prove that some of these strategies are both Nash equilibria and Pareto optimal. Additionally, we prove that the payoffs achieved at these quantum Nash equilibria surpass those attainable in the corresponding classical game with mixed strategies, showcasing the quantum advantage in this context. Finally, in \cref{sec:conclusion}, we conclude with a summary of our results and an outlook on future research directions.

\subsection{Related work}
\label{sec:related_work}

Game theory serves as a foundational framework across various disciplines, including economics, political science, and social science \cite{neumann2004theory,varoufakis2008game}. It explores strategic interactions amongst multiple players, where the payoffs depend not only on each player's decisions but also on the decisions made by others. Its modern origin can be traced back to von Neumann's seminal work in 1928 \cite{von1928theory}, which established fundamental principles for analyzing decision-making in both cooperative and competitive situations. Two important solution concepts developed in game theory that are central to this study are the Nash equilibrium \cite{nash1950equilibrium}, where no player can increase their payoff by unilaterally changing their actions, and Pareto optimality \cite{pardalos2008pareto}, where it is impossible to make any player better off without simultaneously making another player worse off.

The advent of ideas in quantum computation in the 1990s sparked a new question: what if players had access to quantum strategies? This question was first explored by Meyer \cite{meyer1999quantum} and Eisert, Wilkens, and Lewenstein (EWL) \cite{eisert1999quantum} in 1999, leading to the development of quantum game theory. One notable quantum game is the quantum prisoner's dilemma, which ceases to pose a dilemma if quantum strategies are allowed. Unlike the classical prisoner's dilemma game, where players get trapped in non-Pareto optimal Nash equilibria, quantum strategies enable players to achieve Pareto optimal Nash equilibria \cite{eisert1999quantum,eisert2000quantum,du2002playing,dong2021superiority}.

Since the pioneering works of Meyer and EWL, quantum game theory has expanded significantly, introducing a variety of new quantum games. These include quantum Parrondo’s games \cite{flitney2002quantum, lai2020parrondo}, quantum market games \cite{piotrowski2002quantum}, quantum cooperative games \cite{iqbal2002quantum}, the quantum battle-of-the-sexes game \cite{du2000nash,du2001remark,nawaz2004dilemma,consuelo2020pareto,szopa2021efficiency}, the quantum minority game \cite{benjamin2001multiplayer, chen2004n}, and the quantum chicken game (also known as the quantum hawk-dove game) \cite{eisert2000quantum,szopa2021efficiency}. In many of these games, quantum strategies offer strategic advantages over classical ones. For example, the quantum battle-of-the-sexes game resolves the dilemma of multiple equilibria found in the classical version, leading to a unique solution \cite{nawaz2004dilemma}. Similarly, in the quantum chicken game, players using quantum strategies achieve a unique Nash equilibrium with a higher expected payoff compared to what they would achieve using symmetric classical mixed strategies \cite{eisert2000quantum}. Our work extends this body of research, demonstrating that a quantum advantage also emerges in the volunteer's dilemma, which can be viewed as an $n$-player generalization of the chicken game.

The original EWL framework has been extended to various settings, such as iterated \cite{kay2001evolutionary,przepiorka2021emergence, mukhopadhyay2024repeated}, infinitely repeated \cite{ikeda2020foundation,ikeda2021infinitely}, multiplayer \cite{benjamin2001multiplayer}, multi-choice \cite{du2002mutli}, continuous-variable \cite{li2002continuous}, and Bayesian agent-based \cite{debrota2024quantum} scenarios. In these contexts, the quantum advantage often persists, sometimes giving rise to novel behavior. For example, in the multiplayer setting, it has been shown that such games can possess new forms of quantum equilibrium, where entanglement shared among multiple players enables novel cooperative behavior \cite{benjamin2001multiplayer}.

But what accounts for the advantage observed in quantum games? Various studies have explored the role of quantum resources, such as entanglement \cite{du2001entanglement,du2002entanglement,ozdemir2007necessary,li2014entanglement,mohamed2023quantum} and quantum discord \cite{nawaz2010quantum,wei2017quantum}, in creating this advantage. Additionally, the impact of noise in quantum games has been extensively studied \cite{johnson2001playing,chen2003quantum,flitney2004quantum,shuai2007effect,huang2016quantum,khan2018dynamics,kairon2020noisy,legon2023joint,silva2023maximizing}, revealing that in the presence of noise, a system may lose its quantum characteristics, reverting to classical behavior and thereby diminishing the quantum advantage. More recently, studies have focused on identifying conditions that specify appropriate unitary strategies within the EWL framework, further refining our understanding of quantum strategic interactions \cite{frkackiewicz2016strong,frkackiewicz2022nash,frkackiewicz2017quantum,frkackiewicz2024permissible,frkackiewicz2024permissible_four}.

With the advent of small- and intermediate-scale quantum devices \cite{cheng2023noisy}, various groups have experimentally demonstrated quantum games using platforms such as linear optics \cite{lu2004linear,schmid2010experimental}, nuclear magnetic resonance \cite{du2002experimental,mitra2007experimental}, ion traps \cite{buluta2006quantum}, and superconducting devices, including IBM's quantum computers accessible through the cloud \cite{xu2022experimental}. These implementations serve as important benchmarks for early quantum devices and confirm that they operate as expected. Additionally, an interesting study by Chen and Hogg explored how well humans play quantum games \cite{chen2006how}. Surprisingly, they found that even without formal training in quantum mechanics, participants nearly achieved the payoffs predicted by quantum game theory.

Quantum game theory has found diverse applications across a range of fields, including high-frequency trading \cite{khan2021quantum}, negotiations \cite{szopa2014quantum}, reducing food waste in supply chains \cite{li2021reducing}, sustainable development \cite{he2024promoting,he2024reducing,peng2024promoting}, cooperation between pharmaceutical companies \cite{elgazzar2021coopetition}, and open-access publishing \cite{miriyala2024open}. A significant area of application is wireless communication, where quantum strategies can optimize spectrum-sharing in cognitive radio networks \cite{zabaleta2017quantum}. In traffic flow management, quantum game theory has been used to design optimal routing strategies, helping to reduce congestion and improve efficiency \cite{solmeyer2018quantum}. Furthermore, in the domain of quantum networks, quantum game-theoretic approaches have been employed to enhance entanglement distribution protocols, maximizing fidelity and minimizing latency in quantum communication \cite{dey2023quantum}. Our work on the quantum volunteer's dilemma contributes to this growing body of work by offering a framework that can help resolve social dilemmas or improve network coordination.

For those interested in delving deeper into quantum game theory, we recommend various introductory materials \cite{flitney2002introduction,das2023quantumizing}, lecture notes \cite{eisert2000quantum}, surveys \cite{guo2008survey,huang2018survey,kolokoltsov2019quantum}, and reviews \cite{khan2018quantum,ghosh2021quantum} that cover the field’s history, development, and applications.

\section{Preliminaries}
\label{sec:preliminaries}

\subsection{Game-theoretic concepts}

We begin by laying the groundwork for the concepts central to this paper. At the heart of game theory lies the notion of a \textit{game}, which involves multiple players making strategic decisions. Each player is presented with a set of possible strategies, and their payoff is determined by their payoff function, evaluated based on the combination of strategies chosen by all players. We express this formally as follows. 

\begin{definition}[$n$-player game]
For $n \in \mathbb{Z}^+$, an $n$-\textit{player game (in strategic form)} is a  $2n$-tuple 
\begin{align}
    G = (T_1,T_2,\ldots ,T_n; \$_1, \$_2,\ldots,\$_n),
\end{align} 
where each $T_i$ is a set and each $\$_i : T_1 \times T_2 \times \cdots \times T_n \to \mathbb R$ is a real-valued function.
\end{definition}
In this context, $T_i$ represents the \textit{set of strategies} available to player $i$, and $\$_i $ denotes the \textit{payoff function} for player $i$. The value $\$_i(t_1,\ldots,t_n)$ indicates the payoff received by player $i$ when for each $j \in [n]$, the $j$-th player chooses strategy $t_j \in T_j$. The $n$-tuple $(t_1,t_2,\ldots,t_n) \in T_1 \times T_2 \times \cdots \times T_n$ is called a \textit{strategy profile}, representing a complete specification of the strategies chosen by all players.

Next, we turn our attention to three key solution concepts in game theory. For the following definitions, we consider a strategy profile $s = (s_1,s_2,\ldots,s_n) \in T_1 \times T_2 \times \cdots \times T_n$ in the context of an $n$-player game $G = (T_1,T_2,\ldots ,T_n; \$_1, \$_2,\ldots,\$_n)$.

\begin{definition}
[Nash Equilibrium]
The strategy profile $s = (s_1,\ldots,s_n)$ is a \textit{Nash equilibrium} of $G$ if, for every player $i \in [n]$ and for every alternative strategy $t_i \in T_i \setminus \{s_i\}$, the following condition holds:
\begin{align}
\$_i (s_1, s_2, \ldots , s_{i-1}, s_{i}, s_{i + 1}, \ldots ,s_n) \geq \$_i (s_1, s_2, \ldots , s_{i-1}, t_{i}, s_{i + 1}, \ldots ,s_n). 
\end{align}
\end{definition}
In other words, a Nash equilibrium is a strategy profile where no player can improve their payoff by unilaterally deviating from their current strategy, given that the strategies of all the other players remain unchanged \cite{nash1950equilibrium}.

In the special case where each player has only two possible strategies, 0 and 1, i.e., $T_1 = T_2 = \cdots = T_n =\{0,1\}$, a strategy profile $s \in \{0,1\}^n$ is a Nash equilibrium if and only if for all $i \in [n]$, 
\begin{align}
    \$_i (s) \geq \$_i(s \oplus e_i),
    \label{eq:nash_eqm_two}
\end{align}
where $e_i$ is the binary vector that is zero everywhere except on the $i$-th bit, where it is 1, and $\oplus$ denotes binary addition.

A special type of Nash equilibrium is the symmetric Nash equilibrium, where all players adopt the same strategy.
Formally:

\begin{definition}
[Symmetric Nash equilibrium] 
The strategy profile $s = (s_1,\ldots,s_n)$ is a \textit{symmetric Nash equilibrium} of $G$ if it is a Nash equilibrium and all players choose the same strategy, i.e., $s_1 = s_2 = \cdots = s_n$.
\end{definition}

Finally, we introduce the concept of Pareto optimality.
\begin{definition}[Pareto optimal] 
The strategy profile $s = (s_1,\ldots,s_n)$ is \textit{Pareto optimal} in $G$ if for every player $i\in [n]$ and for every strategy profile $t \in T_1\times\cdots \times T_n$, the following condition holds: 
\begin{align} \mbox{if } \$_i(t) > \$_i(s) \mbox{, then there exists } j \in [n] \setminus \{i\} \mbox{ such that } \$_j(t) < \$_j(s).\end{align} 
\label{def:pareto_optimality}
\end{definition}
In other words, a strategy profile is Pareto optimal if improving one player's payoff necessarily results in a decrease in another player's payoff \cite{pardalos2008pareto}.

To conclude, the above solution concepts are fundamental in game theory as they provide insights into strategic interactions. Ideally, we seek strategy profiles that are both a Nash equilibrium and Pareto optimal, as these concepts align with the ideas of efficiency and fairness in strategic interactions.

\subsection{Classical volunteer's dilemma with pure strategies}
\label{sec:deterministic}

The game we consider in this study is the volunteer's dilemma, specifically the cost-sharing version introduced by Weesie and Franzen \cite{weesie1998cost}. In this version, with $n\geq 2$ players, each player decides between volunteering and abstaining. The payoffs of the game are structured as follows: if no players volunteer, all players receive a payoff of 0. If $k>0$ players volunteer, each non-volunteer receives a payoff of $b$, while each volunteer's payoff is $b-c/k$. In other words, if at least one player volunteers, the reward of $b$ units is distributed to all players, while the total cost of $c$ units incurred by volunteering is divided equally among the volunteers. For simplicity, for the rest of this study, we take $b=2$ and $c=1$.

To denote the players' strategies numerically, we use 1 to represent volunteering and 0 to represent abstaining. With this notation, the above game, when players are restricted to pure classical strategies, is represented by the $2n$-tuple $G_{\mathrm{VD}}^{(n)} = (T_1^{\mathrm{VD}},T_2^{\mathrm{VD}},\ldots ,T_n^{\mathrm{VD}}; \$_1^{\mathrm{VD}}, \$_2^{\mathrm{VD}},\ldots,\$_n^{\mathrm{VD}})$, where $T_i^{\mathrm{VD}} = \{0,1\}$ denotes the set of strategies available to player $i$, and $\$_i^{\mathrm{VD}}: \{0,1\}^n \to \mathbb R$ are the payoff functions. These functions, which correspond to \cite[Eq.~(1)]{weesie1998cost} when $b=2$ and $c=1$, are given by:
\begin{align}
    \$_i^{\mathrm{VD}}(x) = \begin{cases}
         2 - \frac{1}{\wt(x)}, & x_i = 1
        \\
        2 \, [\wt(x) > 0], & x_i = 0.
    \end{cases}
    \label{eq:VD_payoff}
\end{align}
Here, $\wt(x) = \{i \in [n]: x_i=1\}$, the Hamming weight of $x$, denotes the number of players choosing to volunteer. The indicator function, represented by the Iverson bracket $[\wt(x)>0]$, equals 1 if $\wt(x) > 0$ (i.e., if at least one player volunteers) and 0 otherwise. The payoffs of the volunteer's dilemma are summarized in table form in  Table~\ref{tab:volunteers_dilemma}. 

\renewcommand{\arraystretch}{1.4}
\captionsetup[table]{
  format=hang,
  width=0.9\textwidth,
  font=small
}
\captionsetup[figure]{
  format=hang,
  width=0.9\textwidth,
  font=small
}
\begin{table}[!ht]
\centering
\begin{tabular}{ |c|*{7} c|}  
 \hline
 \multicolumn{1}{|c|}{} & \multicolumn{7}{c|}{Number of other players who volunteer} \\ \cline{2-8}
 \makecell{Strategy of player $i$} & 0 & 1 & 2 & $\cdots$ & $k-1$ & $\cdots$ & $n-1$ \\ \hline
 \begin{tabular}{c}$1$ (volunteers)\end{tabular} & \begin{tabular}{c}$2-\frac11=1$\end{tabular} & \begin{tabular}{c}$2-\frac12=\frac32$\end{tabular} & \begin{tabular}{c}$2-\frac13=\frac53$\end{tabular}& $\cdots$ & $2-\frac{1}{k}$ & $\cdots$ & $2-\frac{1}{n}$  \\[0.1cm]
 \begin{tabular}{c}$0$ (abstains)\end{tabular} & 0 & 2 & 2 & $\cdots$ & 2 & $\cdots$ & 2  \\ 
 \hline
\end{tabular}
\caption{Payoff matrix for the $n$-player volunteer's dilemma considered in this study. If player $i$ volunteers, they receive a payoff of $2-\tfrac 1k$ if $k-1$ other players volunteer. If player $i$ does not volunteer (i.e., abstains), their payoff is 0 if no other players volunteer, and 2 if at least one other player volunteers.
}
\label{tab:volunteers_dilemma}
\end{table}

There are exactly $n$ Nash equilibria in the game $G_{\mathrm{VD}}^{(n)}$, each characterized by exactly one player volunteering and every other player abstaining. In other words, the Nash equilibria, which are the strategy profiles $x$ that satisfy \cref{eq:nash_eqm_two}, are precisely those $x \in \{0,1\}^n$ with Hamming weight $\wt(x) = 1$. To prove that the strategy profile where exactly one player volunteers (i.e., $\wt(x)=1$) is a Nash equilibrium, consider the case where player $i$ is the sole volunteer, i.e. $x_i = 1$ and $x_j = 0$ for all $j \neq i$. If player $i$ chooses to switch to abstaining, then their payoff would drop to $0$, as no other player would be volunteering. If another player chooses to volunteer alongside player $i$, then that player's payoff would decrease from $2$ to $2- \frac{1}{2} = \frac 32$. Thus, no player can improve their payoff by unilaterally changing their decision, confirming that these strategy profiles are Nash equilibria. Conversely, the strategy profile where all players choose to abstain (i.e., $\wt(x)=0$) cannot be a Nash equilibrium because any player could increase their payoff from 0 to a positive value by choosing to volunteer. Similarly, if more than one player volunteers (i.e, $\wt(x)>1$), then any of the volunteers could increase their payoff from $2 - \frac{1}{k}$ to $2$ by opting to abstain. Therefore, the $n$ strategy profiles where exactly one player volunteers are the only Nash equilibria in the game.

In the special case when the number of players $n=2$, the volunteer's dilemma simplifies to the well-known game of chicken, often depicted as follows: two drivers approach a narrow bridge from opposite directions. The first driver to turn aside allows the other to cross. If neither turns, they risk a potentially deadly head-on collision, which is the most disastrous outcome for both. Each driver prefers to avoid being labeled as the ``chicken'' by staying on course while hoping the other will swerve \cite{rapoport1966game}. 

In this context, swerving corresponds to the strategy of volunteering, and heading straight (i.e., not swerving) corresponds to abstaining. The payoff structure in the volunteer's dilemma can be mapped as follows: if both players volunteer (i.e., both swerve), they receive a payoff of 1. If one player volunteers (i.e., swerves) while the other abstains (i.e., heads straight), the volunteer receives a payoff of 1 while the abstainer receives a payoff of 2 (benefiting from avoiding a crash while not being called a chicken). If neither player volunteers (i.e., both head straight), they receive a payoff of 0 (equivalent to crashing).

The normal form representation of the chicken game, as a bimatrix, is illustrated in Table \ref{tab:payoff_chicken}.

\begin{table}[h!]
\centering
\begin{tabular}{ |c|c|c| } 
 \hline
\backslashbox{Player 1}{Player 2} & $1$ (swerve) & $0$ (straight) \\ \hline
 $1$ (swerve) & \backslashbox[22mm]{$\frac 32$}{$\frac 32$} & \backslashbox[22mm]{1}{2} \\ \hline
 $0$ (straight) & \backslashbox[22mm]{2}{1} & \backslashbox[22mm]{0}{0} \\ 
 \hline
\end{tabular}
\caption{Payoff matrix for the game of chicken, derived by setting the number of players $n= 2$ in the payoff matrix of the volunteer's dilemma shown in Table \ref{tab:volunteers_dilemma}. Swerving is akin to volunteering, and heading straight is akin to abstaining.}
\label{tab:payoff_chicken}
\end{table}

Now, the above characterization of Nash equilibria specializes to the well-known result that, in the game of chicken, the Nash equilibria occur when one player swerves while the other goes straight. Notably, in both chicken and the more general $n$-player volunteer's dilemma, the Nash equilibria are not symmetric, meaning that players do not all adopt the same strategy. Given the symmetric nature of the payoffs, however, one might expect or prefer symmetric Nash equilibria. To achieve such equilibria, one approach is to modify the game $G_{\mathrm{VD}}^{(n)}$ to allow players to employ mixed strategies.

\subsection{Classical volunteer's dilemma with mixed strategies}
\label{sec:mixed}

Mixed strategies in the context of the volunteer's dilemma involve players deciding to volunteer probabilistically rather than deterministically. Specifically, each player's strategy set in this game is the unit interval $[0,1]$, where a strategy of $\pi \in [0,1]$ indicates that the player will choose to volunteer with probability $\pi$. The payoff function for each player is the expected value of the payoff function from the deterministic game, with payoffs weighted by the probability of choosing to volunteer. This formulation interpolates between the pure strategies discussed in \cref{sec:deterministic}, with $\pi=1$ corresponding to volunteering (strategy $x=1$) and $\pi=0$ corresponding to abstaining (strategy $x=0$).

Formally, the classical volunteer's dilemma with mixed strategies may be represented by the
$2n$-tuple $G_{\mathrm{MVD}}^{(n)} = (T_1^{\mathrm{MVD}},T_2^{\mathrm{MVD}},\ldots ,T_n^{\mathrm{MVD}}; \$_1^{\mathrm{MVD}}, \$_2^{\mathrm{MVD}},\ldots,\$_n^{\mathrm{MVD}})$, where $T_i^{\mathrm{MVD}} = [0,1]$ denotes the set of strategies available to player $i$, and $\$_i^{\mathrm{MVD}}: [0,1]^n \to \mathbb R$ are the payoff functions, defined as
\begin{align}
    \$_i^{\mathrm{MVD}}(\pi_1,\pi_2,\ldots,\pi_n) = 
    \sum_{x \in \{0,1\}^n} \$_i^{\mathrm{VD}}(x) q_{\pi_1,\ldots,\pi_n} (x),
    \label{eq:expected_payoff}
\end{align}
where $\$_i^{\mathrm{VD}}$ are the payoff functions of the (deterministic) volunteer's dilemma given by \cref{eq:VD_payoff} and 
\begin{align}
    q_{\pi_1,\ldots,\pi_n} (x_1,\ldots, x_n) = \prod_{i=1}^n \pi_i^{x_i} (1-\pi_i)^{1-x_i} 
    \label{eq:joint_distribution}
\end{align}
denotes the joint probability that the strategy profile $(x_1,\ldots,x_n)$ is chosen.
The strategies of each player are assumed to be chosen independently and so the joint probability mass function $q_{\pi_1,\ldots,\pi_n}$ factorizes; indeed, \cref{eq:joint_distribution} can be written as 
$q_{\pi_1,\ldots,\pi_n} (x_1,\ldots, x_n) = q_{\pi_1} (x_1) \cdots q_{\pi_n} (x_n)$, where $q_{\pi_i}(1) = \pi_i$ is the probability that player $i$ volunteers, and $q_{\pi_i}(0) = 1- \pi_i$ is the probability that they abstain. We will refer to  $\$_i^{\mathrm{MVD}}(\pi_1,\pi_2,\ldots,\pi_n)$ as the \textit{expected payoff} of player $i$ under mixed strategies.

Unlike the volunteer's dilemma with pure strategies $G_{\mathrm{VD}}^{(n)}$, which has no symmetric Nash equilibria, the mixed-strategy game $G_{\mathrm{MVD}}^{(n)}$ admits a unique symmetric Nash equilibrium. Interestingly, at this equilibrium, players volunteer with a probability that is a root of some univariate degree-$n$ polynomial whose coefficients depend on $n$. More precisely, this result may be stated as follows:

\begin{theorem}[Weesie and Franzen
{\cite[Theorem 1i]{weesie1998cost}}]
    The $n$-player volunteer's dilemma with mixed strategies $G_{\mathrm{MVD}}^{(n)}$ has exactly one symmetric Nash equilibrium $(\alpha_n,\alpha_n,\ldots, \alpha_n)$, where $\alpha_n$ is the unique root in the open interval $(0,1)$ of the degree-$n$ polynomial $g_n$ given by
    \begin{align}
        g_n(\alpha) = (1-\alpha)^{n-1}(2n\alpha+1-\alpha)-1.
    \end{align}
\end{theorem}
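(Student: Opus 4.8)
The plan is to reduce the symmetric-equilibrium question to a one-variable calculus problem, exploiting the fact that each player's expected payoff depends \emph{affinely} on their own volunteering probability. First I would fix a player $i$, suppose every other player volunteers independently with probability $\pi$, and compute the two conditional expected payoffs: $A(\pi)$, the payoff to player $i$ when they abstain, and $V(\pi)$, the payoff when they volunteer. Directly from \cref{eq:expected_payoff} and \cref{eq:joint_distribution} one gets $A(\pi) = 2\bigl(1-(1-\pi)^{n-1}\bigr)$ and $V(\pi) = 2 - \sum_{k=0}^{n-1}\tfrac{1}{k+1}\binom{n-1}{k}\pi^k(1-\pi)^{n-1-k}$. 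The crucial simplification is the identity $\tfrac{1}{k+1}\binom{n-1}{k} = \tfrac1n\binom{n}{k+1}$, which (after re-indexing $j=k+1$ and applying the binomial theorem) collapses the sum to $\tfrac{1}{n\pi}\bigl(1-(1-\pi)^n\bigr)$, so that $V(\pi) = 2 - \tfrac{1-(1-\pi)^n}{n\pi}$.

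Next I would use that $\$_i^{\mathrm{MVD}}(\pi_i,\pi,\dots,\pi) = A(\pi) + \pi_i\bigl(V(\pi)-A(\pi)\bigr)$ is affine in $\pi_i$, so player $i$'s best response to the symmetric profile is $\pi_i=1$ if $V(\pi)>A(\pi)$, $\pi_i=0$ if $V(\pi)<A(\pi)$, and any value in $[0,1]$ if $V(\pi)=A(\pi)$. Consequently a symmetric profile $(\pi,\dots,\pi)$ with $\pi\in(0,1)$ is a Nash equilibrium \emph{iff} $V(\pi)=A(\pi)$; I would rule out the two boundary profiles separately (against all-abstain a lone volunteer gains $1>0$, and in the all-volunteer profile any player gains by abstaining since $2>2-\tfrac1n$). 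Clearing denominators in $V(\pi)=A(\pi)$ gives $1-(1-\pi)^n = 2n\pi(1-\pi)^{n-1}$, i.e. $(1-\pi)^{n-1}(2n\pi+1-\pi) - 1 = 0$, which is exactly $g_n(\pi)=0$. Hence the symmetric Nash equilibria are precisely the profiles $(\pi,\dots,\pi)$ with $\pi$ a root of $g_n$ in $(0,1)$.

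Finally I would show $g_n$ has exactly one root in $(0,1)$. Setting $f(\pi) := g_n(\pi)+1 = (1-\pi)^{n-1}\bigl(1+(2n-1)\pi\bigr)$, a short product-rule computation yields $f'(\pi) = n(1-\pi)^{n-2}\bigl(1-(2n-1)\pi\bigr)$, so $f$ is strictly increasing on $[0,\tfrac{1}{2n-1}]$ and strictly decreasing on $[\tfrac{1}{2n-1},1]$. Since $f(0)=1$ and $f(1)=0$, this forces $f(\pi)>1$ on $(0,\tfrac{1}{2n-1}]$ and exactly one solution of $f(\pi)=1$ on $(\tfrac{1}{2n-1},1)$, giving the unique root $\alpha_n$; expanding shows the leading coefficient of $g_n$ is $(-1)^{n-1}(2n-1)\neq0$, confirming $\deg g_n=n$. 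I expect the main obstacle to be bookkeeping rather than a single hard step: correctly executing the binomial identity and the ensuing algebra, faithfully translating the Nash condition into the indifference equation $V(\pi)=A(\pi)$ (including discarding the spurious root $\pi=0$ of $g_n$ and excluding the boundary profiles), and then running the monotonicity argument for $f$ without sign errors.
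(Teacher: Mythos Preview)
Your argument is correct and complete: the affine dependence on $\pi_i$, the indifference condition $V(\pi)=A(\pi)$, the binomial-identity simplification of $V(\pi)$, the exclusion of the boundary profiles, and the monotonicity analysis of $f(\pi)=g_n(\pi)+1$ all check out line by line (including the derivative $f'(\pi)=n(1-\pi)^{n-2}(1-(2n-1)\pi)$).

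Note, however, that the paper does not actually supply a proof of this theorem; it is quoted verbatim as \cite[Theorem~1i]{weesie1998cost} and used as a black box. So there is no in-paper argument to compare against. Your proof is essentially the standard derivation one finds in Weesie and Franzen's original paper: reduce to the indifference condition between volunteering and abstaining, simplify via $\tfrac{1}{k+1}\binom{n-1}{k}=\tfrac1n\binom{n}{k+1}$, and then argue uniqueness of the interior root by studying the shape of $(1-\pi)^{n-1}(1+(2n-1)\pi)$. Nothing is missing.
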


\begin{figure}
    \centering
    \includegraphics[width=0.6\linewidth]{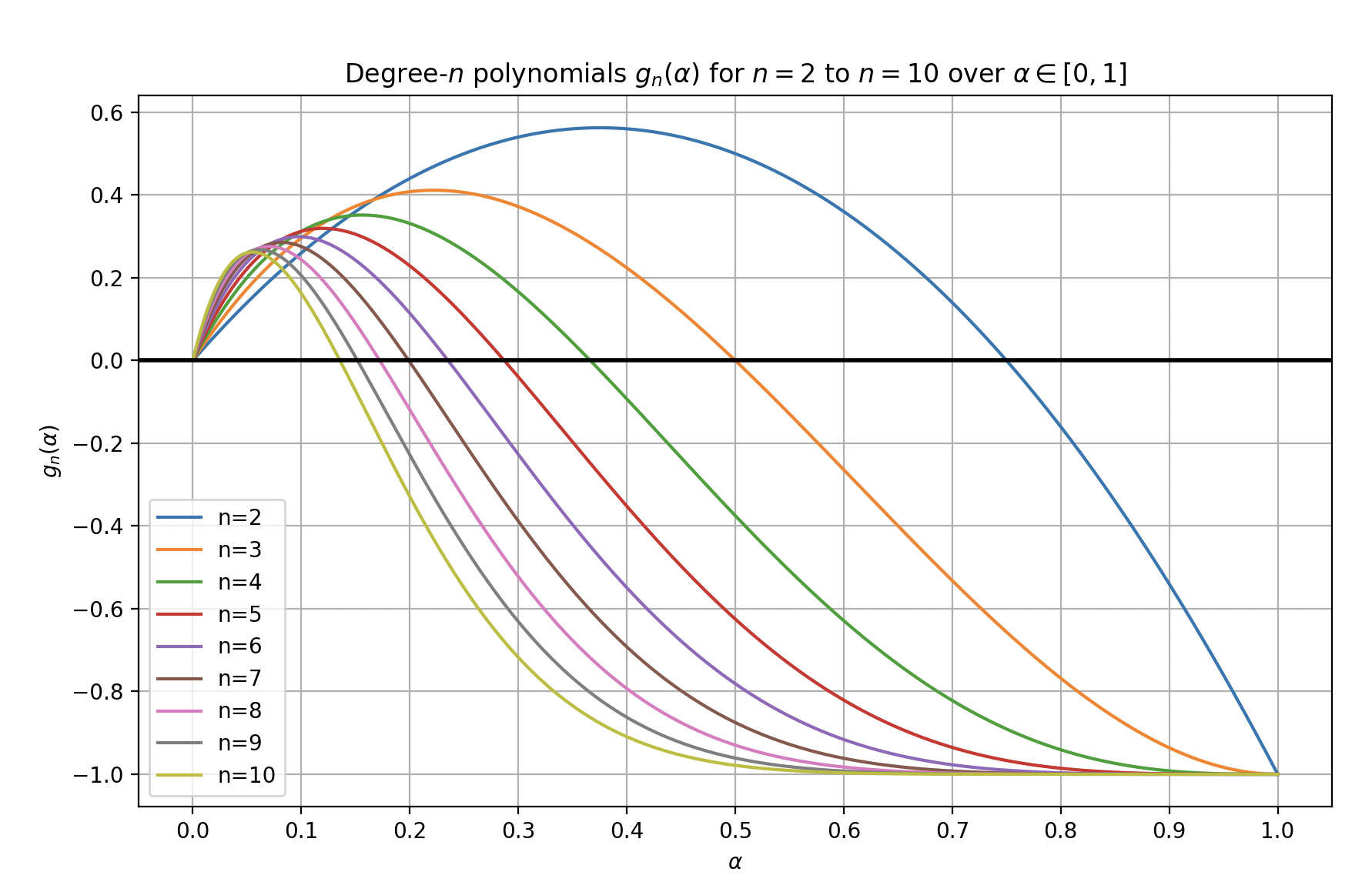}
    \caption{Plots of the degree-$n$ polynomials $g_n(\alpha)$ versus $\alpha$ over the interval $\alpha \in [0,1]$, for $n=2$ to $n=10$. 
    The unique root of $g_n(\alpha)$ within the interval $(0,1)$ is the strategy adopted by each player at the unique symmetric Nash equilibrium in the classical volunteer's dilemma with mixed strategies.}
    \label{fig:gn}
\end{figure}
To illustrate the polynomials $g_n(\alpha)$, we plot them over the interval $[0,1]$ for $n=2$ to $n=10$ in \cref{fig:gn}. As can be seen from this plot, the unique root $\alpha_n$ of $g_n$ in $(0,1)$ decreases as $n$ increases. For large $n$, the root asymptotically behaves as \cite[Theorem 1ii]{weesie1998cost}:
\begin{align}
\alpha_n = \frac{\omega^*}{n}+O(n^{-2}),
\label{eq:approx_alpha_n}
\end{align}
where 
$\omega^*  = -\frac 12 - W_{-1} \! \left(- \frac 1{2 \sqrt{\e}}\right)\approx 1.25643$ is the unique positive solution for $\omega$ of the equation
\begin{align}
\e^\omega= 1 + 2 \omega,
\end{align}
with $W_k(z)$ denoting the $k$-th branch of the Lambert W-function evaluated at $z$, implemented in the Wolfram Language as $\texttt{ProductLog[k,z]}$ \cite{lambert_w_function,product_log}. To illustrate the decreasing trend of $\alpha_n$ and its behavior for large $n$, we present a plot of $\alpha_n$ versus $n$ and compare it with the approximation $\alpha_n \approx \omega^*/n$ in \cref{fig:alpha_n}. 

\begin{figure}
    \centering
    \includegraphics[width=0.6\linewidth]{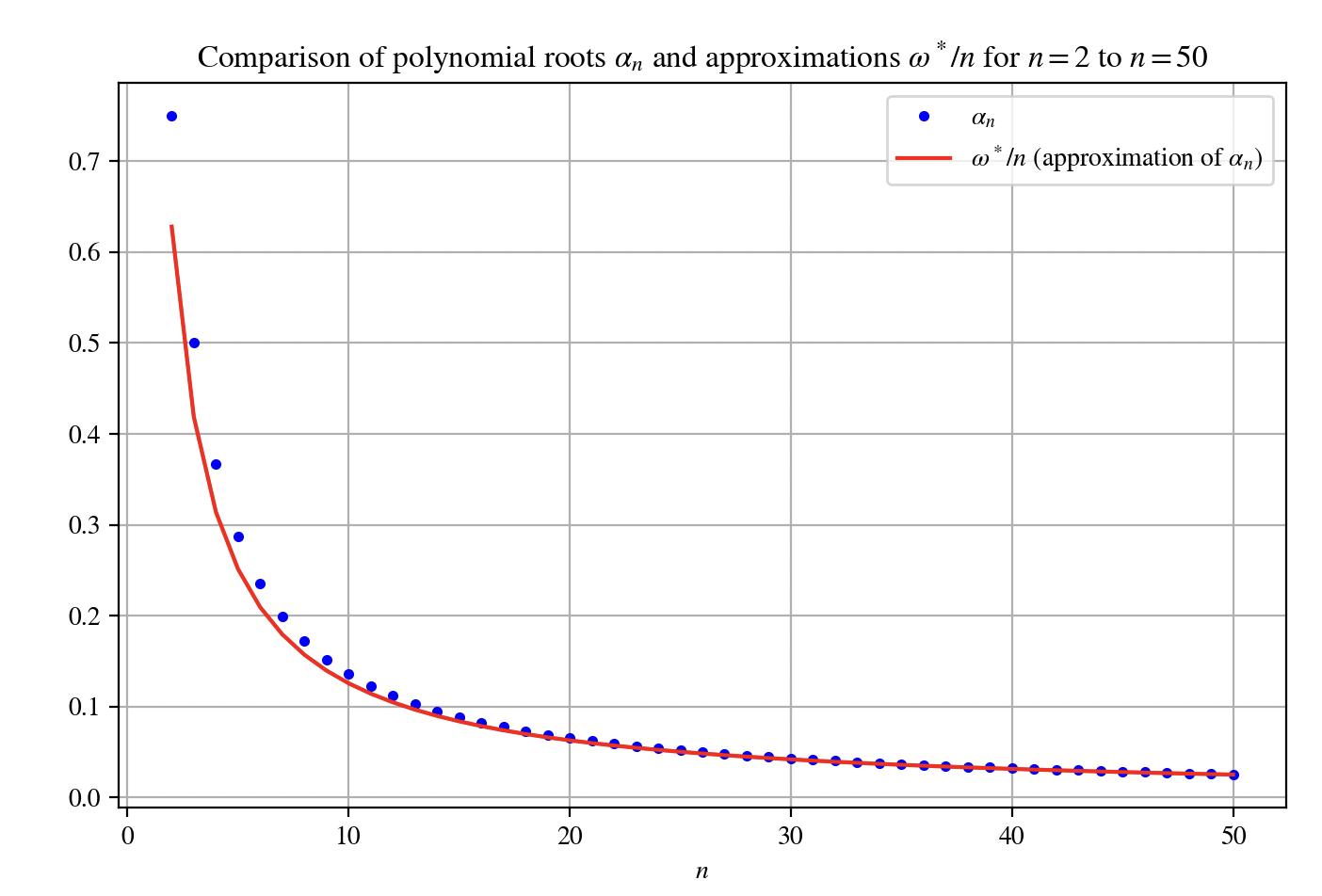}
    \caption{Plots of the roots $\alpha_n$ versus $n$ for $n$ ranging from 2 to 50. The root $\alpha_n$, which is the unique root of the polynomial $g_n$ within the open interval $(0,1)$, represents the strategy adopted by each player at the unique symmetric Nash equilibrium in the classical volunteer's dilemma with mixed strategies.
    The approximation $\omega^*/n$ estimates $\alpha_n$, with an error term of $O(n^{-2})$.
    }
    \label{fig:alpha_n}
\end{figure}

Our next theorem provides an expression for the expected payoff of each player when players choose the unique symmetric Nash equilibrium strategy profile.

\begin{theorem}
    For the $n$-player classical volunteer's dilemma with mixed strategies $G_{\mathrm{MVD}}^{(n)}$, when the unique symmetric Nash equilibrium strategy profile $(\alpha_n,\ldots,\alpha_n)$ is chosen, the expected payoff of each player $i$ defined by \cref{eq:expected_payoff} evaluates to
    \begin{align}
    \$_i^{\mathrm{MVD}}(\alpha_n, \alpha_n, \ldots, \alpha_n) = \left( 2-\frac1n\right)\cdot  \left(1-(1-\alpha_n)^n\right).
    \label{eq:payoff_MVD_symmetricNE}
\end{align}
\end{theorem}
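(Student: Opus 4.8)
The plan is to first compute the symmetric expected payoff $\$_i^{\mathrm{MVD}}(\pi,\pi,\ldots,\pi)$ for an \emph{arbitrary} common volunteering probability $\pi \in [0,1]$, establishing the closed form
\begin{align}
\$_i^{\mathrm{MVD}}(\pi,\ldots,\pi) = \left(2-\frac1n\right)\bigl(1-(1-\pi)^n\bigr),
\end{align}
and then simply substitute $\pi = \alpha_n$. (This incidentally shows the identity in \cref{eq:payoff_MVD_symmetricNE} holds at every symmetric profile, not only at the equilibrium one.) Starting from \cref{eq:expected_payoff}, I would split the sum over $x \in \{0,1\}^n$ according to whether player $i$ volunteers ($x_i=1$) or abstains ($x_i=0$), using the corresponding two cases of the payoff function \cref{eq:VD_payoff}. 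Since all players use the same $\pi$, the joint distribution \cref{eq:joint_distribution} depends on $x$ only through $\wt(x)$, so each partial sum can be re-indexed by $k=\wt(x)$, with the number of contributing strings supplied by a binomial coefficient.

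For the abstaining part ($x_i=0$), the other $n-1$ coordinates are free and the payoff is $2$ exactly when at least one of them equals $1$; summing the binomial distribution over those coordinates gives $2(1-\pi)\bigl(1-(1-\pi)^{n-1}\bigr)$. For the volunteering part ($x_i=1$), the weight $k$ ranges over $1,\ldots,n$ with $\binom{n-1}{k-1}$ contributing strings, so the contribution is $\sum_{k=1}^n \binom{n-1}{k-1}\bigl(2-\tfrac1k\bigr)\pi^k(1-\pi)^{n-k}$. The ``$2$'' piece collapses to $2\pi$ by the binomial theorem after pulling out one factor of $\pi$, and the ``$-1/k$'' piece is handled by the elementary identity $\tfrac1k\binom{n-1}{k-1}=\tfrac1n\binom nk$, which turns $\sum_{k=1}^n \tfrac1k\binom{n-1}{k-1}\pi^k(1-\pi)^{n-k}$ into $\tfrac1n\bigl(1-(1-\pi)^n\bigr)$. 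Adding the volunteering and abstaining parts and simplifying yields the displayed closed form; this last rearrangement is pure bookkeeping.

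There is no real obstacle here: the only points requiring care are the combinatorial identity $\tfrac1k\binom{n-1}{k-1}=\tfrac1n\binom nk$ and the correct indexing of the binomial sums (in particular, keeping track of which $k=0$ terms are excluded). As a cross-check — and an alternative proof of the value at $\alpha_n$ alone — one can instead invoke the standard fact that at a completely mixed symmetric Nash equilibrium every player is indifferent between their pure strategies, so the equilibrium payoff equals the payoff from abstaining, namely $2\bigl(1-(1-\alpha_n)^{n-1}\bigr)$; this equals $\bigl(2-\tfrac1n\bigr)\bigl(1-(1-\alpha_n)^n\bigr)$ precisely because $\alpha_n$ satisfies $g_n(\alpha_n)=0$, i.e.\ $(1-\alpha_n)^{n-1}(2n\alpha_n+1-\alpha_n)=1$. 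This second route explains transparently why the equilibrium condition is exactly what is needed, but it gives only the value at $\alpha_n$, whereas the direct computation yields the formula for all $\pi$.
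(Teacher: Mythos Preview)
Your proof is correct, and in fact proves a slightly stronger statement than the paper claims (the identity holds at \emph{every} symmetric profile $(\pi,\ldots,\pi)$, not just at $\alpha_n$), but the route is genuinely different from the paper's.

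The paper argues by conditioning on the event $A=\{\text{all abstain}\}$ versus its complement and exploiting a symmetry observation: if at least one player volunteers and there are $k>0$ volunteers, the \emph{total} payoff distributed is $k(2-\tfrac1k)+2(n-k)=2n-1$, independent of $k$. By symmetry of the profile, each player's conditional expected payoff given $A^{\mathsf c}$ is therefore exactly $(2n-1)/n=2-\tfrac1n$, and the law of total expectation immediately gives $\bigl(2-\tfrac1n\bigr)\mathbb P(A^{\mathsf c}) = \bigl(2-\tfrac1n\bigr)\bigl(1-(1-\alpha_n)^n\bigr)$. No binomial sums or combinatorial identities are needed. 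Your approach instead splits on player $i$'s own action and pushes through the binomial calculus, with the identity $\tfrac1k\binom{n-1}{k-1}=\tfrac1n\binom nk$ doing the work of collapsing the $1/k$ term. Both arguments are short; the paper's is more conceptual (it explains \emph{why} the constant $2-\tfrac1n$ appears: it is the per-capita share of the fixed total $2n-1$), whereas yours is purely computational but self-contained. Your alternative route via the indifference condition is also sound and nicely pinpoints the role of $g_n(\alpha_n)=0$, though as you note it only recovers the value at the equilibrium itself.
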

\begin{proof}
Consider the scenario in which every player chooses to volunteer with probability $\alpha_n$, and let $A$ be the event where every player abstains. The probability of the event $A$ is given by
$$\mathbb P(A)=(1-\alpha_n)^n.$$ 
In this case, the conditional expected payoff for each player is 
\begin{align}
    \mathbb E[\$_i^{\mathrm{VD}}(x)|A]= 0,
\end{align} 
as there are no volunteers.

Next, consider the complement event $A^{\mathsf{c}}$, where at least one player volunteers. By the complement rule, the probability of this event is 
$$ \mathbb P(A^{\mathsf{c}})=1- \mathbb P(A)=1-(1-\alpha_n)^n.$$ If the number of volunteers is $k>0$, then the total payoff distributed among all players is $(2 - \frac{1}{k})k + 2(n-k) = 2n-1$. Consequently, given that at least one player volunteers, the conditional expected payoff for each player is 
\begin{align}
    \mathbb E[\$_i^{\mathrm{VD}}(x)|A^{\mathsf{c}}] = \frac{2n-1}{n} = 2-\frac{1}{n}.
\end{align}
Therefore, by the law of total expectation, the expected payoff for each player can be expressed as
\begin{align*}
\$_i^{\mathrm{MVD}}(\alpha_n,\alpha_n,\ldots,\alpha_n) &= \mathbb E[\$_i^{\mathrm{VD}}(x)] 
\\
&= 
\mathbb E[\$_i^{\mathrm{VD}}(x)|A] \cdot \mathbb P(A)+ \mathbb E[\$_i^{\mathrm{VD}}(x)|A^{\mathsf{c}}] \cdot \mathbb P(A^\mathsf{c}) \\
&= 0 \cdot (1-\alpha_n)^n + \left( 2-\frac1n\right)\cdot \left(1-(1-\alpha_n)^n\right)\\
&= \left( 2-\frac1n\right)\cdot  \left(1-(1-\alpha_n)^n\right),
\end{align*}
which completes the proof of the theorem.
\end{proof}

A few remarks about the expected payoff given in \cref{eq:payoff_MVD_symmetricNE} are in order. First, note that the expression $1-(1-\alpha_n)^n < 1$ since $\alpha_n \in (0,1)$. Consequently, the expected payoff of each player is bounded above as follows: 
\begin{align}
\label{eq:payoff_upper_bound}
\$_i^{\mathrm{MVD}}(\alpha_n, \alpha_n, \ldots, \alpha_n) < 2-\frac1n.    
\end{align}

Second, for large $n$, $\alpha_n$ scales as $\frac{\omega^*}n$, with this approximation being valid when neglecting terms of order $n^{-2}$ and smaller (\cref{eq:approx_alpha_n}). Hence, in the limit as $n$ approaches infinity, the expected payoff converges to
\begin{align}
\label{eq:payoff_limit}
    \lim_{n\to \infty}\$_i^{\mathrm{MVD}}(\alpha_n, \alpha_n, \ldots, \alpha_n) &= 
    \lim_{n \to \infty} \left( 2-\frac1n\right)\cdot  \left(1-(1-\alpha_n)^n\right)
    \nonumber\\
    &=
    2  \left( 1 -  \lim_{n \to \infty}\left(1-\alpha_n \right)^n\right) \nonumber\\
    &=
    2  \left( 1 -  \lim_{n \to \infty}\left(1-\frac{w^*}{n}\right)^n\right) \nonumber\\
    &= 2  \left( 1 -  \e^{-w^*}\right)
    \approx 1.43042.
\end{align}

The relational statements in \cref{eq:payoff_upper_bound,eq:payoff_limit} can be contrasted with those for the quantum volunteer's dilemma, which we will introduce in \cref{sec:quantum_volunteer's_dilemma}. In contrast to the strict upper bound in \cref{eq:payoff_upper_bound}, where the classical payoff is strictly less than $2-\frac 1n$, the quantum volunteer's dilemma can achieve symmetric Nash equilibria where the payoff of each player equals $2-\frac 1n$. Moreover, whereas the limit of the payoff in the classical game as $n$ approaches infinity is $\approx\!\!1.43042$, the quantum game's limit is significantly larger, approaching $\lim_{n\to\infty} (2-\frac 1n) = 2$. This increase in payoffs underscores the advantage of the quantum game over its classical counterpart. We compare the payoffs at these symmetric Nash equilibria for both the classical and quantum games in \cref{fig:payoffs_symmetric}.

\begin{figure}
    \centering
    \includegraphics[width=0.7\linewidth]{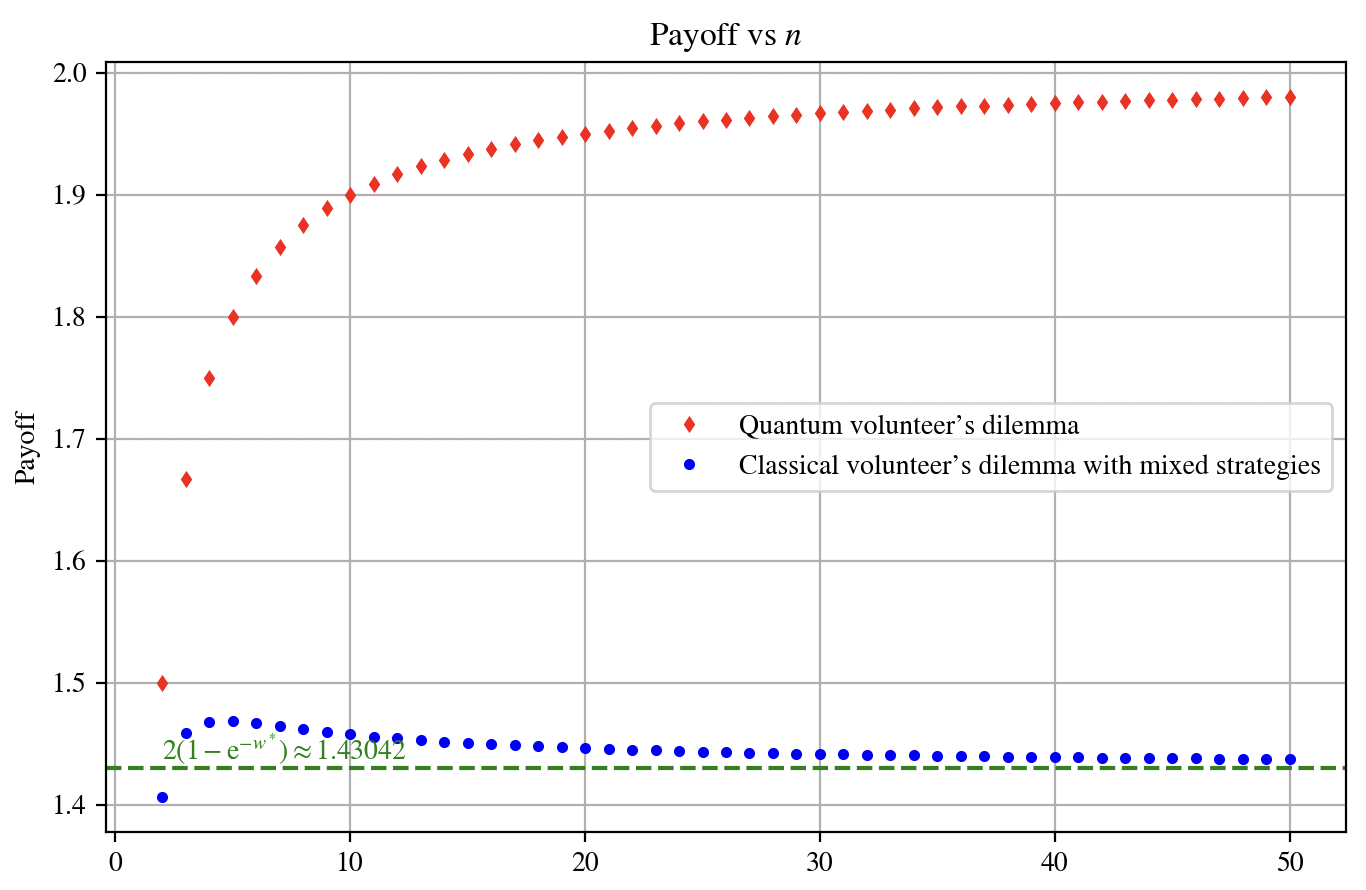}
    \caption{Payoffs at symmetric Nash equilibria for both the quantum volunteer's dilemma and the classical volunteer's dilemma with mixed strategies, as a function of the number of players $n$. The classical payoff asymptotically approaches $2(1-\e^{-w^*})\approx 1.43042$, while the quantum payoff, which is $2-\frac 1n$, asymptotically approaches 2, demonstrating a larger value.
    } \label{fig:payoffs_symmetric}
\end{figure}

\section{Quantum volunteer's dilemma}
\label{sec:quantum_volunteer's_dilemma}

We now introduce the quantum volunteer's dilemma, which extends the classical volunteer’s dilemma by allowing players to use quantum strategies rather than classical ones, following the Eisert--Wilkens--Lewenstein quantization framework \cite{eisert1999quantum}.

\subsection{Game setup}

The quantum volunteer's dilemma game begins with $n$ players receiving an  $n$-qubit entangled state, with each player controlling one qubit. We fix the shared state as $J \ket 0^{\otimes n}$, where $J$ is the $n$-qubit entangling unitary operator
\begin{align}
   J = \e^{-\frac{\i\uppi}{4} Y^{\otimes n}} = \tfrac 1{\sqrt2} (I-\i Y^{\otimes n}),
\end{align}
where $Y= \left(\begin{smallmatrix}
    0 & -\i \\ \i & 0
\end{smallmatrix}\right)$ is the Pauli-Y matrix.

Each player independently selects a quantum operation from a designated set to apply to their respective qubit. We shall fix this set to be the two-parameter family of unitaries $\{ U(\theta, \phi): \theta \in [0,4\uppi), \phi \in [0,2\uppi) \}$,
where
\begin{align}
    U(\theta, \phi) = 
    \begin{pmatrix} 
    \e^{\i \phi} \cos(\frac{\theta}{2}) & \sin(\frac{\theta}{2}) \\
    -\sin(\frac{\theta}{2}) & \e^{-\i \phi} \cos(\frac{\theta}{2})
    \end{pmatrix}.
    \label{eq:two_parameter_unitaries}
\end{align}
Note that choosing a strategy from the above two-parameter strategies amounts to selecting two real parameters $\theta$ and $\phi$. Accordingly, we represent each player's strategy set by $\Theta: = [0,4\uppi) \times [0,2\uppi)$, and say that player $i$'s strategy is $(\theta_i,\phi_i) \in \Theta$ if they select the operation $U(\theta_i,\phi_i)$ to perform on their qubit. Let $\Theta^n$ denote the $n$-fold Cartesian product of $\Theta$ and denote its elements by $(\theta,\phi) := ((\theta_1,\phi_1),\ldots,(\theta_n,\phi_n))$.

After the players apply their quantum operation $U(\theta_i,\phi_i)$, a collective measurement is performed. We fix this measurement to be the $n$-qubit entangling measurement conducted in the basis $\{J\ket k: k \in \{0,1\}^n\}$. Equivalently, this can be achieved by applying the entangling gate $J^\dag$ followed by a computational basis measurement. We associate the $i$-th bit of the measurement outcome $k \in \{0,1\}^n$ with whether player $i$ volunteers, with $k_i = 0$ indicating that they volunteer, and $k_i = 1$ indicating that they abstain. The quantum circuit depicted in \cref{fig:volunteer_dilemma_circuit} provides a visual summary of these steps.

\begin{figure}
    \centering
\begin{align}    \Qcircuit @C=1em @R=.7em {
|0\rangle && \multigate{7}{J} & \gate{U(\theta_1, \phi_1)} & \multigate{7}{J^\dag}& \meter & \cw \\
|0\rangle && \ghost{J}&  \gate{U(\theta_2, \phi_2)} & \ghost{J^\dag} & \meter & \cw \\
|0\rangle && \ghost{J}&  \gate{U(\theta_3, \phi_3)} & \ghost{J^\dag} & \meter & \cw \\
\\
\mspace{260mu} \vdots \mspace{125mu}\vdots
\mspace{125mu}\vdots
\\
\\~\\
|0\rangle && \ghost{J}&  \gate{U(\theta_n, \phi_n)} & \ghost{J^\dag} & \meter & \cw}
\end{align}
\caption{Circuit diagram for the quantum volunteer's dilemma. First, the entangling gate $J$ is applied to the $n$-qubit computational basis state $\ket 0^{\otimes n}$. Each player $i\in [n]$ then applies the single-qubit unitary $U(\theta_i,\phi_i)$ to qubit $i$. Next, the entangling gate $J^\dag$ is applied, followed by a computational basis measurement.}
    \label{fig:volunteer_dilemma_circuit}
\end{figure}
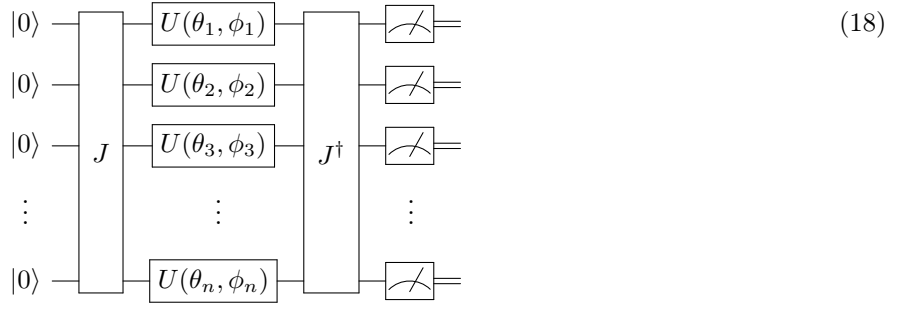

Note that the quantum volunteer's dilemma reduces to the classical volunteer's dilemma with mixed strategies, as discussed in Section~\ref{sec:mixed}, if the players can choose only those $(\theta,\phi)$ in \cref{eq:two_parameter_unitaries} where $\phi = 0$. In this case, volunteering is represented by $U(0,0)=I$, and abstaining represented by $U(\uppi,0)=\left(\begin{smallmatrix} 0 & 1 \\ -1 & 0 \end{smallmatrix}\right)$.

The probability distribution resulting from the measurement may be expressed as follows. If players choose the parameters $\theta = (\theta_1,\ldots,\theta_n) \in [0,4\uppi)^n$ and $\phi = (\phi_1, \ldots, \phi_n) \in [0,2\uppi)^n$, the pre-measurement state before the computational basis measurement is performed is given by
\begin{align}
\label{eq:premeasure} \ket{\psi_f(\theta, \phi)} = J^\dagger \cdot \bigotimes_{i=1}^n U(\theta_i, \phi_i) \cdot J |0\rangle^{\otimes n}. \end{align}

For a string $x\in \{0,1\}^n$, let $\bar x \in \{0,1\}^n$ denote the string where each bit in $x$ is flipped, i.e., $\bar x_i = 1-x_i$. Let $a_{\theta, \phi}(x)$ and $p_{\theta, \phi}(x)$ represent the amplitude and probability, respectively, of observing the binary string $\bar x$ when the state  $|\phi_f\rangle$ is measured in the computational basis. According to Born's rule, the amplitude and probability of measuring $\bar x$ are
\begin{align}
a_{\theta, \phi}(x) &=
\braket{\bar x}{\psi_f(\theta,\phi)}
=
\bra{\bar x} J^\dagger \cdot \bigotimes_{i=1}^n U(\theta_i, \phi_i) \cdot J |0\rangle^{\otimes n} \text{, and }
\label{eq:amplitude_quantum}
\\
    p_{\theta, \phi}(x) 
&=
\left|
a_{\theta, \phi}(x)
    \right|^2
    = 
    \left|
\braket{\bar x}{\psi_f(\theta,\phi)}
    \right|^2
    = \left| \bra{\bar x} J^\dagger \cdot \bigotimes_{i=1}^n U(\theta_i, \phi_i) \cdot J |0\rangle^{\otimes n}
    \right|^2.
    \label{eq:quantum_probability}
\end{align}
Thus, $p_{\theta, \phi}(x)$ is the probability that players in $\supp(x)$ volunteer and players in $[n]\backslash \supp(x)$ abstain, where $\supp(x) = \{i \in [n]: x_i = 1\}$ denotes the support of the string $x \in \{0,1\}^n$.

The expected payoff of player $i$ is then given by
\begin{align}
     \$_i(\theta,\phi) = \sum_{x \in \{0,1\}^n}\$_i^{\mathrm{VD}}(x)p_{\theta,\phi}(x),
     \label{eq:expected_quantum_payoffs}
\end{align}
where $\$_i^{\mathrm{VD}}$ represents the payoff functions in the deterministic volunteer's dilemma as defined in \cref{eq:VD_payoff} and $p_{\theta,\phi}(x)$ is defined in \cref{eq:quantum_probability}.

Formally, the $n$-player quantum volunteer's dilemma can be represented by the $2n$-tuple 
\begin{align}
    G^{(n)}_{\mathrm{QVD}} = (T_1,T_2, \ldots ,T_n;\$_1,\$_2, \ldots ,\$_n),
\end{align}
where $T_1 = T_2 = T_3 = \cdots = T_n = \Theta$ are the strategy sets and 
$\$_i : \Theta^n \to \mathbb{R}$, defined by \cref{eq:expected_quantum_payoffs}, are the payoff functions.

\subsection{Analytical expressions for the payoff functions}

The goal of this section is to simplify the expression for the payoff function in \cref{eq:expected_quantum_payoffs}. To achieve this, we first introduce some auxiliary functions that will help us represent the payoff functions more compactly. Let 
\begin{align}
c_{\theta,\phi}(x) &= \cos\left(\sum_{i:x_i=1} \phi_i\right) \cdot \prod_{i:x_i=0} \sin\left(\frac{\theta_i}{2}\right) \cdot \prod_{i:x_i=1} \cos\left(\frac{\theta_i}{2}\right),
\label{eq:function_c}
\\
s_{\theta,\phi}(x) &= \sin\left(\sum_{i:x_i=0} \phi_i\right) \cdot \prod_{i:x_i=0} \cos\left(\frac{\theta_i}{2}\right) \cdot \prod_{i:x_i=1} \sin\left(\frac{\theta_i}{2}\right).
\label{eq:function_s}
\end{align}
The product of these functions is given by
\begin{align}
    t_{\theta,\phi}(x) &= c_{\theta,\phi}(x)s_{\theta,\phi}(x)
    \nonumber
    \\ 
    &= \frac{1}{2^n} \cos\left(\sum_{i:x_i=1} \phi_i\right) \sin\left(\sum_{i:x_i=0} \phi_i\right) \prod_{i = 1}^n \sin(\theta_i),
    \label{eq:function_t}
\end{align}
where the last line follows from the double-angle identity $\sin(\theta_i )= 2\sin(\theta_i/2) \cos(\theta_i/2)$.

We now present an analytical expression for the payoff functions.

\begin{lemma} \label{lma3} 
The payoff for player $i \in [n]$, as given by \cref{eq:expected_quantum_payoffs}, is given by:
\begin{align}
\label{eq:payoff_simplifed}
     \$_i(\theta,\phi) = 2\sum_{\substack{x \in \{0,1\}^n \\ x \neq 0^n \\ x_i = 0}} p_{\theta,\phi}(x) + \sum_{k = 1}^n \left(2-\frac{1}{k}\right)\sum_{\substack{x \in \{0,1\}^n \\ x_i = 1 \\ \wt(x) = k}} p_{\theta,\phi}(x),
\end{align}
where the probability, as defined in \cref{eq:quantum_probability}, is
\begin{align}
    p_{\theta, \phi}(x)=c_{\theta, \phi}^2 (x) + s_{\theta, \phi}^2 (x)-2(-1)^{\frac{n}2+\wt(x)}t_{\theta, \phi}(x) \iverson{n\in 2 \mathbb{Z}}.
    \label{eq:prob_simplified}
\end{align}
\end{lemma}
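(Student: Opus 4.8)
The plan is to split the derivation into two independent parts corresponding to the two displayed equations. First I would establish \cref{eq:payoff_simplifed}, the reorganization of the payoff sum, and then separately establish \cref{eq:prob_simplified}, the explicit formula for $p_{\theta,\phi}(x)$ in terms of the auxiliary functions $c_{\theta,\phi}$, $s_{\theta,\phi}$, and $t_{\theta,\phi}$.

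For \cref{eq:payoff_simplifed}, I would start from the definition \cref{eq:expected_quantum_payoffs}, substitute the deterministic payoff $\$_i^{\mathrm{VD}}(x)$ from \cref{eq:VD_payoff}, and partition the sum over $x \in \{0,1\}^n$ according to the value of $x_i$. When $x_i = 0$, the payoff is $2[\wt(x) > 0]$, which contributes $2 \sum_{x: x_i = 0, x \neq 0^n} p_{\theta,\phi}(x)$ (the $x = 0^n$ term vanishes because the payoff there is $0$). When $x_i = 1$, the payoff is $2 - 1/\wt(x)$, and further stratifying by $\wt(x) = k$ for $k = 1, \ldots, n$ yields the second sum. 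This part is essentially bookkeeping and should be routine.

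The substantive part is \cref{eq:prob_simplified}. I would compute the amplitude $a_{\theta,\phi}(x) = \bra{\bar x} J^\dagger \bigotimes_i U(\theta_i,\phi_i) J \ket{0}^{\otimes n}$ directly. The key structural fact is that $J = \frac{1}{\sqrt 2}(I - \i Y^{\otimes n})$, so $J\ket{0}^{\otimes n} = \frac{1}{\sqrt 2}(\ket{0}^{\otimes n} - \i\, \i^n \ket{1}^{\otimes n})$ using $Y\ket 0 = \i \ket 1$; similarly $\bra{\bar x} J^\dagger$ expands into a combination of $\bra{\bar x}$ and $\bra{x}$ (since $Y^{\otimes n}$ flips every bit, $\ket{\bar x}$ is mapped to a phase times $\ket x$). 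Therefore $a_{\theta,\phi}(x)$ decomposes into four terms of the form $\bra{y} \bigotimes_i U(\theta_i,\phi_i) \ket{z}$ with $y, z \in \{x, \bar x, 0^n, 1^n\}$; because $U(\theta_i,\phi_i)$ acts independently on each qubit, each such matrix element factorizes as $\prod_i \bra{y_i} U(\theta_i,\phi_i)\ket{z_i}$. Reading off the entries of $U(\theta_i,\phi_i)$ from \cref{eq:two_parameter_unitaries} — namely $\bra 0 U \ket 0 = \e^{\i\phi_i}\cos(\theta_i/2)$, $\bra 1 U \ket 1 = \e^{-\i\phi_i}\cos(\theta_i/2)$, $\bra 0 U \ket 1 = \sin(\theta_i/2)$, $\bra 1 U \ket 0 = -\sin(\theta_i/2)$ — I would identify the surviving contributions with $c_{\theta,\phi}(x)$ and $s_{\theta,\phi}(x)$ (up to signs and the $\i^n$-type phases from $J$), showing something like $a_{\theta,\phi}(x) = c_{\theta,\phi}(x) + \i\, \epsilon_n \, (-1)^{\wt(x)} s_{\theta,\phi}(x)$ for a suitable unimodular $\epsilon_n$ depending on $n \bmod 4$. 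Taking the squared modulus then gives $c_{\theta,\phi}^2(x) + s_{\theta,\phi}^2(x)$ plus a cross term $2\,\mathrm{Re}(\cdots) = \pm 2 t_{\theta,\phi}(x)$; the cross term survives only when the relative phase between the two pieces is real, which is exactly the parity condition $n \in 2\mathbb Z$ captured by the Iverson bracket $[n \in 2\mathbb Z]$, and the sign works out to $-2(-1)^{n/2 + \wt(x)}$.

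The main obstacle I anticipate is bookkeeping the phases correctly: tracking the factors of $\i$ coming from $J$ and $J^\dagger$ (which contribute $\i^n$-type phases), the $(-1)^{\wt(x)}$ signs from the $\bra 1 U \ket 0 = -\sin(\theta_i/2)$ entries, and the constraint that determines when the interference term is real versus imaginary. Getting the exponent $\frac n2 + \wt(x)$ and the placement of the Iverson bracket $[n \in 2\mathbb Z]$ exactly right is where care is needed; I would double-check the small cases $n = 2$ and $n = 3$ against \cref{tab:volunteers_dilemma} and the claim that setting all $\phi_i = 0$ recovers the classical mixed-strategy probabilities $q_{\pi_1,\ldots,\pi_n}$ from \cref{eq:joint_distribution}. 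Everything else — the factorization over qubits, the trigonometric identification, and the final payoff reassembly — is mechanical once the phase accounting is pinned down.
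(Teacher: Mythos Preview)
Your proposal is correct and follows essentially the same route as the paper: first the bookkeeping rearrangement of the payoff sum by partitioning on $x_i$ and then on $\wt(x)$, and second the direct expansion of $J\ket{0}^{\otimes n}$ and $\bra{\bar x}J^\dagger$ into two-term superpositions, yielding four factorized matrix elements that combine into $(-1)^{\wt(\bar x)}c_{\theta,\phi}(x) - \i^n s_{\theta,\phi}(x)$, whose squared modulus is then split by the parity of $n$. The only refinement is that the paper's amplitude carries the sign on the $c$-term rather than the $s$-term, but as you note this is exactly the phase bookkeeping you flagged as the delicate step.
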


\begin{proof}

We begin by deriving \cref{eq:payoff_simplifed}. Starting from \cref{eq:expected_quantum_payoffs}, we obtain

\begin{align}
    \$_i(\theta,\phi) &= 
    \sum_{x \in \{0,1\}^n}\$_i^{\mathrm{VD}}(x)p_{\theta,\phi}(x) \nonumber\\
    &=
    \underbrace{\$_i^{\mathrm{VD}}(0^n)}_{=0}
    p_{\theta,\phi}(0^n) +
\left(
        \raisebox{4mm}{
        $\displaystyle\sum_{
        \underset{
        \substack{x \neq 0^n \\ x_i = 0}
        }{
        x \in \{0,1\}^n
        }
        }
        +
        \displaystyle\sum_{
        \substack{
        x \in \{0,1\}^n \\
        x_i = 1
        }
        }$
        }
        \right)
    \$_i^{\mathrm{VD}}(x)p_{\theta,\phi}(x) \nonumber\\
    &= 
    2 \sum_{
        \underset{
        \substack{x \neq 0^n \\ x_i = 0}
        }{
        x \in \{0,1\}^n
        }
        }
        p_{\theta,\phi}(x)
        +
        \sum_{
        \substack{
        x \in \{0,1\}^n \\
        x_i = 1
        }
        } \left( 2 - \frac{1}{\wt(x)} \right) p_{\theta,\phi}(x)
        \nonumber\\
        &=
        2 \sum_{
        \underset{
        \substack{x \neq 0^n \\ x_i = 0}
        }{
        x \in \{0,1\}^n
        }
        }
        p_{\theta,\phi}(x)
        +
\sum_{k=1}^n
        \sum_{
        \underset{
        \substack{\wt(x) =k \\ x_i = 1}
        }{
        x \in \{0,1\}^n
        }
        }
        \left( 2 - \frac{1}{\wt(x)} \right) p_{\theta,\phi}(x)
        \nonumber\\
        &=
        2\sum_{\substack{x \in \{0,1\}^n \\ x \neq 0^n \\ x_i = 0}} p_{\theta,\phi}(x) + \sum_{k = 1}^n \left(2-\frac{1}{k}\right)\sum_{\substack{x \in \{0,1\}^n \\ x_i = 1 \\ \wt(x) = k}} p_{\theta,\phi}(x).
\end{align}
To prove \cref{eq:prob_simplified}, consider how the quantum state evolves through each step in the circuit shown in \cref{fig:volunteer_dilemma_circuit}. The entangled state that the players receive is
\begin{align}
 J \ket{0}^{\otimes n} = \frac{1}{\sqrt{2}} \left(| 0 \rangle^{\otimes n} - \i^{n+1} | 1\rangle^{\otimes n}\right).   
\end{align}

After the players apply their local operations, the state becomes
\begin{align}
\label{eq:post_local_operation}
    \bigotimes^n_{i = 1} U(\theta_i,\phi_i) \cdot J | 0 \rangle^{\otimes n} = \frac{1}{\sqrt{2}}\left[\bigotimes^n_{i = 1} U(\theta_i,\phi_i) \ket 0 - \i^{n+1} \bigotimes^n_{i = 1} U(\theta_i,\phi_i) \ket 1  \right].
\end{align}

On the other hand, the elements of the measurement basis can be written as
\begin{align}
\label{eq:measurement_basis_J}
    J | \bar{x} \rangle &=  \frac{1}{\sqrt{2}} (I - \i Y^{\otimes n}) | \bar{x}\rangle 
    \nonumber\\
    &= \frac1{2} \left( |\bar{x}\rangle - \i \bigotimes_{j=1}^n Y|\bar{x}_j \rangle \right) \nonumber \\
    &= \frac1{2} \left( |\bar{x}\rangle - \i \bigotimes_{j=1}^n (-1)^{\bar x_j} \i \ket{\bar{x}_j} \right) 
    \nonumber \\
    &= \frac{1}{\sqrt{2}} \left(| \bar{x} \rangle - \i^{n+1} \cdot (-1)^{\wt(\bar{x})} |x \rangle\right).  
\end{align}

Substituting \cref{eq:post_local_operation,eq:measurement_basis_J} into the amplitude expression given by \cref{eq:amplitude_quantum}, we obtain
\begin{align}
    a_{\theta,\phi}(x) &= \frac{1}{\sqrt{2}} \left(\langle \Bar{x} | - (-\i)^{n+1}(-1)^{\wt(\Bar{x})} \langle x | \right) \frac{1}{\sqrt{2}} \left(\bigotimes^n_{i = 1} U(\theta_i,\phi_i) | 0 \rangle - \i^{n+1} \bigotimes^n_{i = 1} U(\theta_i,\phi_i) | 1 \rangle\right)
        \nonumber\\
        &=
        \frac 12 \left(
        \prod_{i=1}^n \bra{\bar x_i} U(\theta_i,\phi_i)\ket 0 - \i^{n+1}
        \prod_{i=1}^n 
\bra{\bar x_i} U(\theta_i,\phi_i)\ket 1 \right.
\nonumber\\
&\quad\qquad \left. -(-\i)^{n+1} (-1)^{\wt(\bar x)} \prod_{i=1}^n \bra{x_i} U(\theta_i,\phi_i)\ket 0
+ (-1)^{\wt(\bar x)} \prod_{i=1}^n \bra{x_i} U(\theta_i,\phi_i)\ket 1
    \right)
\nonumber\\
&= \frac{1}{2} \Bigg[ \prod_{i:x_i = 0} \langle 1 | U(\theta_i,\phi_i) | 0 \rangle \prod_{i:x_i = 1} \langle 0 | U(\theta_i,\phi_i) | 0 \rangle -\i^{n+1} \prod_{i:x_i = 0} \langle 1 | U(\theta_i,\phi_i) | 1 \rangle \prod_{i:x_i = 1} \langle 0 | U(\theta_i,\phi_i) | 1 \rangle \nonumber\\
&\quad\qquad - (-\i)^{n+1} (-1)^{\wt(\Bar{x})} 
\prod_{i:x_i = 0} \langle 0 | U(\theta_i,\phi_i) | 0 \rangle \prod_{i:x_i = 1} \langle 1 | U(\theta_i,\phi_i) | 0 \rangle 
\nonumber\\
&\quad\qquad + (-1)^{\wt(\Bar{x})} \prod_{i:x_i = 0} \langle 0 | U(\theta_i,\phi_i) | 1 \rangle \prod_{i:x_i = 1} \langle 1 | U(\theta_i,\phi_i) | 1 \rangle \Bigg].
\label{eq:amplitude_intermediate}
    \end{align}

Substituting the expressions for the matrix elements $\langle 0 | U(\theta_i,\phi_i) | 0 \rangle = \e^{\i \phi_i} \cos(\frac{\theta_i}{2})$, $\langle 1 | U(\theta_i,\phi_i) | 0 \rangle = -\sin(\frac{\theta_i}{2})$, $\langle 0 | U(\theta_i,\phi_i) | 1 \rangle = \sin(\frac{\theta_i}{2})$, and $\langle 1 | U(\theta_i,\phi_i) | 1 \rangle = \e^{-\i \phi_i} \cos(\frac{\theta_i}{2})$ into \cref{eq:amplitude_intermediate}, we get
\begin{align}
    a_{\theta,\phi}(x) &= \frac{1}{2} \Bigg[ \prod_{i:x_i = 0} \left(-\sin \frac{\theta_i}{2}\right) 
    \cdot
    \prod_{i:x_i = 1} 
    \e^{\i \phi_i} \cos\left(\frac{\theta_i}{2}
    \right) -\i^{n+1} \prod_{i:x_i = 0} \e^{-\i \phi_i} \cos\left(\frac{\theta_i}{2}
    \right) 
    \cdot
    \prod_{i:x_i = 1} \sin\left(\frac{\theta_i}{2}
    \right) \nonumber\\
&\quad\qquad - (-\i)^{n+1} (-1)^{\wt(\Bar{x})} 
\prod_{i:x_i = 0}     \e^{\i \phi_i} \cos\left(\frac{\theta_i}{2}
    \right) \cdot
    \prod_{i:x_i = 1} \left( 
    - \sin \frac{\theta_i}{2}
    \right)
\nonumber\\
&\quad\qquad + (-1)^{\wt(\Bar{x})} \prod_{i:x_i = 0} \sin \frac{\theta_i}{2} 
\cdot
\prod_{i:x_i = 1}     \e^{-\i \phi_i} \cos\left(\frac{\theta_i}{2}
    \right) \Bigg]
    \nonumber\\
&= \frac{1}{2} \Bigg[
(-1)^{\wt(\bar x)} \e^{\i \sum_{i:x_i=1} \phi_i} \prod_{i:x_i=0} \sin \frac{\theta_i}{2} 
\prod_{i:x_i=1} \cos \frac{\theta_i}{2}
-\i^{n+1} \e^{-\i \sum_{i:x_i=0} \phi_i}
\prod_{i:x_i=0} \cos \frac{\theta_i}{2} 
\prod_{i:x_i=1} \sin \frac{\theta_i}{2}
\nonumber\\
&\quad\qquad
-(-\i)^{n+1} (-1)^n
\e^{\i \sum_{i:x_i=0} \phi_i}
\prod_{i:x_i=0} \cos \frac{\theta_i}{2} 
\prod_{i:x_i=1} \sin \frac{\theta_i}{2}
\nonumber\\
&\quad\qquad
+
(-1)^{\wt(\bar x)}
\e^{-\i \sum_{i:x_i=1} \phi_i} \prod_{i:x_i=0} \sin \frac{\theta_i}{2} 
\prod_{i:x_i=1} \cos \frac{\theta_i}{2}
\Bigg]
\nonumber\\
&=
(-1)^{\wt(\bar x)} \cdot \frac 12
\left(
\e^{\i \sum_{i:x_i=1} \phi_i}+\e^{-\i \sum_{i:x_i=1} \phi_i}
\right) \prod_{i:x_i=0} \sin \frac{\theta_i}{2} 
\prod_{i:x_i=1} \cos \frac{\theta_i}{2} 
\nonumber\\
&\quad\qquad
+ \i^{n+1} \cdot
\frac 12
\left(
\e^{\i \sum_{i:x_i=0} \phi_i}-\e^{-\i \sum_{i:x_i=0} \phi_i}
\right) \prod_{i:x_i=0} \cos \frac{\theta_i}{2} 
\prod_{i:x_i=1} \sin \frac{\theta_i}{2} 
\nonumber\\
&=
(-1)^{\wt(\Bar{x})} \cos\left(\sum_{i:x_i = 1} \phi_i\right) \prod_{i:x_i = 0} \sin{\frac{\theta_i}{2}} \prod_{i:x_i = 1} \cos{\frac{\theta_i}{2}} - \i^n \sin\left(\sum_{i:x_i = 0} \phi_i\right) \prod_{i:x_i = 0} \cos{\frac{\theta_i}{2}} \prod_{i:x_i = 1} \sin{\frac{\theta_i}{2}}.
\label{eq:amplitudes_intermediate}
    \end{align}

Squaring these amplitudes gives the probabilities: 
\begin{align}
    p_{\theta,\phi}(x) = \left|(-1)^{\wt(\Bar{x})} \cos\left(\sum_{i:x_i = 1} \phi_i\right) \prod_{i:x_i = 0} \sin{\frac{\theta_i}{2}} \prod_{i:x_i = 1} \cos{\frac{\theta_i}{2}} - \i^n \sin\left(\sum_{i:x_i = 0} \phi_i\right) \prod_{i:x_i = 0} \cos{\frac{\theta_i}{2}} \prod_{i:x_i = 1} \sin{\frac{\theta_i}{2}}
    \right|^2.
\end{align}

The amplitude given by \cref{eq:amplitudes_intermediate} can be either real or have a non-zero imaginary component, depending on whether $n$ is even or odd. Specifically, when $n$ is even, $\i^n$ is real, whereas for odd $n$, $\i^n$ is imaginary. In light of this, when evaluating the probability $p_{\theta,\phi}(x)$, we will consider these two cases separately.

\vspace{0.1cm}
\noindent \underline{Case 1: $n$ is even}. When $n$ is even, $\i^n = (\i^2)^{n/2}=(-1)^{n/2}$, and hence the amplitude given by \cref{eq:amplitudes_intermediate} is real. Therefore, the probability expression simplifies to the square of this real amplitude.
    \begin{align}
        p_{\theta,\phi}(x) &= \left[(-1)^{\wt(\Bar{x})} \cos\left(\sum_{i:x_i = 1} \phi_i\right) \prod_{i:x_i = 0} \sin{\frac{\theta_i}{2}} \prod_{i:x_i = 1} \cos{\frac{\theta_i}{2}} - (-1)^{\frac n2} \sin\left(\sum_{i:x_i = 0} \phi_i\right) \prod_{i:x_i = 0} \cos{\frac{\theta_i}{2}} \prod_{i:x_i = 1} \sin{\frac{\theta_i}{2}}
        \right]^2
        \nonumber\\
        &= \left[(-1)^{\wt(\bar{x})} c_{\theta, \phi}(x) - (-1)^{n/2} s_{\theta, \phi}(x)\right]^2 
        \nonumber\\
        &= c_{\theta, \phi}^2(x) + s_{\theta, \phi}^2(x) - 2(-1)^{\wt(\bar{x})} (-1)^{n/2} c_{\theta, \phi}(x) s_{\theta, \phi}(x) 
        \nonumber\\
        &= c_{\theta, \phi}^2(x) + s_{\theta, \phi}^2(x) - 2(-1)^{\wt(x) + n/2} t_{\theta, \phi}(x),
        \label{eq:prob_even}
    \end{align}
where the last line follows from the fact that $(-1)^{n/2+\wt(\bar x)} = (-1)^{n/2+\wt(x)}$.

\noindent \underline{Case 2: $n$ is odd}. When $n$ is odd, $\i^n = \i(-1)^{\frac{n-1}2}$ is purely imaginary. Therefore, the probability expression simplifies as follows:
\begin{align}
        p_{\theta,\phi}(x) 
        &=\left[\cos\left(\sum_{i:x_i=1} \phi_i\right) \prod_{i:x_i=0} \sin \frac{\theta_i}{2} \prod_{i:x_i=1} \cos \frac{\theta_i}{2} \right]^2 + \left[\sin\left(\sum_{i:x_i=0} \phi_i\right) \prod_{i:x_i=0} \cos\frac{\theta_i}{2} \prod_{i:x_i=1} \sin\frac{\theta_i}{2}\right]^2
         \nonumber \\
         &= c_{\theta, \phi}^2(x) + s_{\theta, \phi}^2(x).
         \label{eq:prob_odd}
    \end{align}

By making use of the Iverson bracket, \cref{eq:prob_even,eq:prob_odd} can be compactly expressed as
\begin{align}
\label{eq:probability_compact}
    p_{\theta,\phi}(x) =
    c_{\theta, \phi}^2 (x) + s_{\theta, \phi}^2 (x)-2(-1)^{\frac{n}2+\wt(x)}t_{\theta, \phi}(x) [n\in 2 \mathbb{Z}],
\end{align}
which completes the proof of the lemma.
\end{proof}

\subsection{Symmetric Nash equilibria for \texorpdfstring{$n \leq 9$}{n≤9}}
\label{sec:strategy_Q}

In this section, we will exhibit the existence of a symmetric Nash equilibrium for the quantum volunteer's dilemma with $n \le 9$ that yields a higher payoff compared to the classical volunteer's dilemma with mixed strategies. As we will show in our next theorem, this symmetric Nash equilibrium occurs when all players choose the strategy $Q:=\left(0, \tfrac{\uppi}n \right) \in \Theta$, i.e., when the strategy profile is $Q^n = (Q,Q,\ldots,Q) = \left((0,\tfrac{\uppi}{n}),(0,\tfrac{\uppi}{n}),\ldots, (0,\tfrac{\uppi}{n})\right) \in \Theta^n$. This corresponds to each player applying the unitary operator
\renewcommand*{\arraystretch}{1}
\begin{align}
    U\left(0,\tfrac{\uppi}{n}\right) = \begin{pmatrix} 
    \e^{\i \uppi/n} & 0 \\
    0 & \e^{-\i \uppi/n}
    \end{pmatrix}
\end{align}
to their respective quantum systems.

Our next theorem also establishes that with the strategy profile $Q^n$, each player will volunteer with probability 1, meaning that the probability distribution in \cref{eq:probability_compact} becomes the degenerate distribution $p_{Q^n}(x) = [x = 1^n]$, resulting in each player's payoff being $\$_a(Q^n) = 2-\frac 1n$. Here $1^n = 11\cdots 1 \in \{0,1\}^n$ denotes the all-ones string.

The key difference in Nash equilibrium behavior between $n\leq 9$ and $n\geq 10$ arises from the following lemma concerning the function $f(x) = \sin^2(\uppi x)-x$. The behavior of $f(x)$ is illustrated in \cref{fig:lemma8}, which helps to visualize the regions where $f(x)$ is positive or negative.
\begin{lemma}
\label{lem:sign_of_sin2pix-x}
Let $x>0$. Then,
    \begin{align}
        \sin^2(\uppi x) - x  \begin{cases}
            < 0, & \mbox{$\mathrm{for}$ } 0 < x \leq \frac{1}{10} \\
            > 0, &  \mbox{$\mathrm{for}$ }\frac{1}{9} \leq x \leq \frac{1}{2}. 
        \end{cases}    
    \end{align}
\end{lemma}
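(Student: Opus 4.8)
The plan is to estimate $\sin^2(\uppi x)$ by elementary trigonometric bounds on the two stated ranges separately. For $0<x\le\tfrac1{10}$ the argument is routine: from $|\sin t|\le|t|$ we get $\sin^2(\uppi x)\le\uppi^2x^2$, and since $\uppi^2<10$ (for instance $\uppi<\sqrt{10}$) while $10x^2\le x$ whenever $0<x\le\tfrac1{10}$, it follows that $\sin^2(\uppi x)\le\uppi^2x^2<10x^2\le x$, so $\sin^2(\uppi x)-x<0$.

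For $\tfrac19\le x\le\tfrac12$ I want to prove $\sin^2(\uppi x)>x$, which is the substantive case: near the left endpoint the margin is tiny, $\sin^2(\uppi/9)-\tfrac19=\sin^2 20^{\circ}-\tfrac19\approx 0.006$, so crude bounds do not suffice. I would split $[\tfrac19,\tfrac12]$ into the three subintervals $[\tfrac19,\tfrac18]$, $[\tfrac18,\tfrac14]$ and $[\tfrac14,\tfrac12]$, and on each obtain a linear lower bound for $\sin(\uppi x)$ using monotonicity and concavity of $\sin$ on $[0,\uppi]$. On $[\tfrac14,\tfrac12]$: $\sin(\uppi x)\ge\sin(\uppi/4)=\tfrac1{\sqrt2}$, hence $\sin^2(\uppi x)\ge\tfrac12\ge x$, with strictness since $\sin^2(\uppi/2)=1>\tfrac12$. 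On $[\tfrac18,\tfrac14]$: the chord of $\sin$ through $(0,0)$ and $(\tfrac{\uppi}4,\tfrac1{\sqrt2})$ lies below the graph, so $\sin(\uppi x)\ge 2\sqrt2\,x$ and $\sin^2(\uppi x)\ge 8x^2>x$ as soon as $8x>1$ (with $x=\tfrac18$ covered by the next subinterval). On $[\tfrac19,\tfrac18]$: the chord through $(0,0)$ and $(\tfrac{\uppi}8,\sin\tfrac{\uppi}8)$ gives $\sin(\uppi x)\ge 8\sin(\tfrac{\uppi}8)\,x$, so with $\sin^2(\uppi/8)=\tfrac{2-\sqrt2}{4}$ we get $\sin^2(\uppi x)\ge 64\sin^2(\tfrac{\uppi}8)\,x^2=(32-16\sqrt2)\,x^2$, which exceeds $x$ for every $x\ge\tfrac19$ because $32-16\sqrt2>9$, i.e.\ $529>512$. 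Patching the three pieces together gives $\sin^2(\uppi x)-x>0$ on $[\tfrac19,\tfrac12]$.

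The delicate point is the subinterval $[\tfrac19,\tfrac18]$: a naive linear bound (from $|\sin t|\le|t|$, or from the chord over $[0,\tfrac{\uppi}4]$) is too weak, so the chord must be anchored at a point whose sine has a clean closed form ($\uppi/8$, producing the surd $\sqrt2$), after which everything reduces to $529>512$. An alternative route is to rewrite the claim via the double-angle identity as $\cos(2\uppi x)<1-2x$ and analyze $h(x)=1-2x-\cos(2\uppi x)$: on $[\tfrac19,\tfrac14]$ one has $h'(x)=2\bigl(\uppi\sin(2\uppi x)-1\bigr)>0$ since there $\sin(2\uppi x)\ge\sin 40^{\circ}>\tfrac12>\tfrac1{\uppi}$, while on $[\tfrac14,\tfrac12]$ the function $h$ is concave ($h''=4\uppi^2\cos(2\uppi x)\le0$) and hence minimized at an endpoint; this collapses the whole second range to the single numerical inequality $\cos 40^{\circ}<\tfrac79$, which can be certified from the truncated series $\cos t\le 1-\tfrac{t^2}2+\tfrac{t^4}{24}$, or from the fact that $\cos 40^{\circ}$ is a root of $8t^3-6t+1=0$ together with $8(\tfrac79)^3-6\cdot\tfrac79+1=\tfrac{71}{729}>0$ and $\cos 80^{\circ}<\tfrac12$. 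In every version, the entire difficulty is concentrated in one tight estimate at $x=\tfrac19$.
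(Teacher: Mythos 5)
Your proof is correct, and it takes a genuinely different route from the paper's. The paper works directly with $f(x)=\sin^2(\uppi x)-x$ by calculus: it computes $f'(x)=\uppi\sin(2\uppi x)-1$ and $f''$, locates the two stationary points $x_0=\tfrac{1}{2\uppi}\arcsin(\tfrac1\uppi)\approx 0.0515$ and $x_1=\tfrac12-x_0\approx 0.448$ in $(0,\tfrac12)$, classifies them as the unique local minimum in $(0,\tfrac1{10})$ and unique local maximum in $(\tfrac19,\tfrac12)$, and then reads off the signs from the endpoint values $f(0)=0$, $f(\tfrac1{10})\approx-0.0045$, $f(\tfrac19)\approx 0.0043$, $f(\tfrac12)=\tfrac12$. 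That argument is short and gives a global picture of $f$, but it leans on decimal evaluations of transcendental quantities precisely where the inequality is tightest (at $x=\tfrac1{10}$ and $x=\tfrac19$ the margins are below $0.005$). Your piecewise chord argument replaces those numerics with exact certificates: $\uppi^2<10$ handles $(0,\tfrac1{10}]$, and on $[\tfrac19,\tfrac18]$ the chord anchored at $\uppi/8$ (where $\sin^2(\uppi/8)=\tfrac{2-\sqrt2}{4}$ is a surd) reduces the delicate endpoint $x=\tfrac19$ to $32-16\sqrt2>9$, i.e.\ $529>512$; the remaining subintervals are easy. The one bookkeeping point — that the middle chord only gives $8x^2>x$ strictly for $x>\tfrac18$ — you correctly patch by letting the first subinterval absorb $x=\tfrac18$. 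Your alternative via $h(x)=1-2x-\cos(2\uppi x)$, ending in $8(\tfrac79)^3-6\cdot\tfrac79+1=\tfrac{71}{729}>0$ plus $\cos 80^{\circ}<\tfrac12$, is likewise a valid, fully certified variant closer in spirit to the paper's monotonicity analysis. In short: the paper's proof is more economical, yours is more self-contained and rigorous at the tight endpoints; either establishes the lemma.
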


\begin{proof}
    Let $f(x) = \sin^2(\uppi x) - x$. We first evaluate the function at a few points of interest: $f(0)=0$, $f(\frac 1{10})\approx -0.0045 < 0$, $f(\frac19) \approx 0.0043>0$, and $f(0.5) = \frac12>0$.

Since $f$ is continuous, it is sufficient to show that $f$ has exactly one minimum point in the interval $\left(0,\tfrac 1{10}\right)$ and exactly one maximum point in the interval $\left(\tfrac 19, \tfrac 12\right)$. To establish this, we compute the first two derivatives of $f$, solve for the stationary points, and determine whether these points are maxima or minima within the interval $(0, \frac12)$:
\begin{align}
    f'(x) &= \uppi \sin(2\uppi x) - 1 ,
    \nonumber\\
        f''(x) &= 2 \uppi^2 \cos(2\uppi x).
\end{align}

To find the stationary points, set
\begin{align}
f'(x) = 0 &\implies \sin(2\uppi x) = \frac{1}{\uppi} \nonumber\\
&\implies x \in \left(\frac{1}{2\uppi} \arcsin\left(\frac{1}{\uppi}\right) + \mathbb{Z}\right) \cup \left(\frac{1}{2} - \frac{1}{2\uppi}\arcsin\left(\frac{1}{\uppi} \right)+ \mathbb{Z} \right).
\end{align}

The intersection of the above set with the interval $\left(0, \frac12\right)$, which we are interested in, contains only two stationary points, namely
\begin{align}
    x_0 := \frac{1}{2\uppi} \arcsin\left(\frac{1}{\uppi}\right) \approx 0.0515, \qquad  x_1  := \frac{1}{2}-\frac{1}{2\uppi} \arcsin\left(\frac{1}{\uppi}\right) \approx 0.448.
\end{align}

We check that  $f''(x_0) \approx 18.71>0$ and $f''(x_1) \approx -18.71<0$. Thus, $x_0$ is the unique local minimum in the interval $\left(0, \tfrac{1}{10}\right)$ and $x_1$ is the unique local maximum in the interval $\left(\tfrac{1}{9}, \tfrac{1}{2}\right)$. This concludes the proof of the lemma. \end{proof}

\begin{figure}
    \centering
    \includegraphics[width=0.6\linewidth]{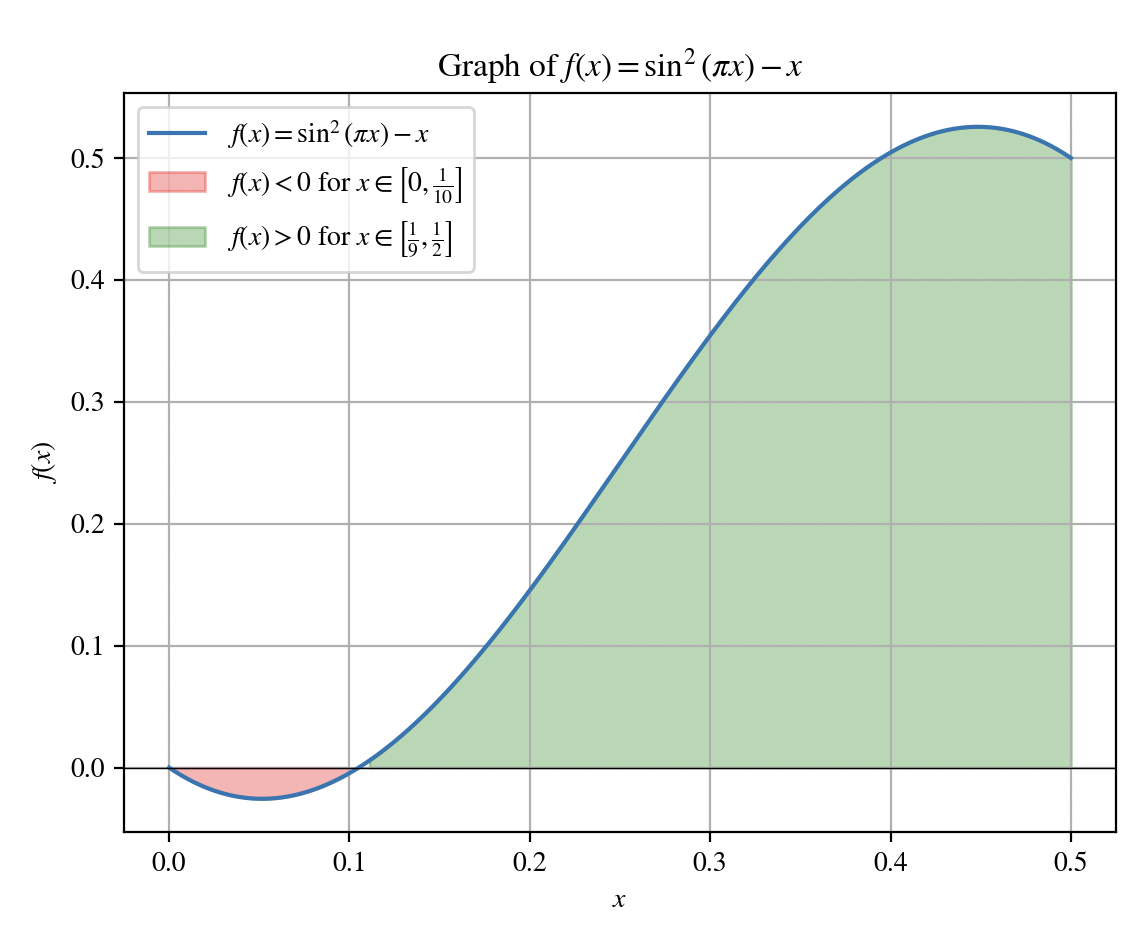}
    \caption{
    Graph of the function $f(x) = \sin^2(\uppi x)-x$ for $x$ in the interval $\left[0,\tfrac 12 \right]$. The graph $f(x)$ is negative for $x \in \left(0,\tfrac 1{10}\right]$ and positive for $x \in \left[\tfrac 19,\tfrac 12\right]$.
    }
    \label{fig:lemma8}
\end{figure}

We now use \cref{lem:sign_of_sin2pix-x} to prove our main theorem about the strategy profile $Q^n$.

\begin{theorem}
\label{thm:QVD_main}
    For the $n$-player quantum volunteer's dilemma $G^{(n)}_{\mathrm{QVD}}$, if the players adopt the strategy profile $Q^n$, then every player volunteers with probability $1$, i.e., the probability distribution is given by
    $p_{Q^n}(x) = [x = 1^n]$, and the corresponding payoff of each player $a\in [n]$ is
    \begin{align}
        \$_a(Q^n) = 2 -\frac 1n.
        \label{eq:payoff_Qn}
    \end{align}
    Moreover,  
    \begin{enumerate} \item if $2 \le n \le 9$, then $Q^n$ is a Nash equilibrium of $G^{(n)}_{\mathrm{QVD}}$, and 
    \item if $n \ge 10$, then $Q^n$ is not a Nash equilibrium of $G^{(n)}_{\mathrm{QVD}}$. \end{enumerate}
\end{theorem}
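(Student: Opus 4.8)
The plan is to first verify the ``base'' claims---that $p_{Q^n}(x) = [x=1^n]$ and hence $\$_a(Q^n) = 2 - 1/n$---by a direct substitution into \cref{eq:probability_compact}, and then to analyze unilateral deviations to establish parts a) and b). For the probability claim, I would plug $\theta_i = 0$ and $\phi_i = \uppi/n$ for all $i$ into \cref{eq:function_c,eq:function_s}. Since $\sin(\theta_i/2) = \sin 0 = 0$, the function $c_{\theta,\phi}(x)$ vanishes unless $x$ has no zero coordinates, i.e.\ unless $x = 1^n$; similarly $s_{\theta,\phi}(x)$ vanishes unless $x = 0^n$. For $x = 1^n$ one gets $c_{Q^n}(1^n) = \cos(\sum_i \uppi/n) = \cos\uppi = -1$ and $s_{Q^n}(1^n) = 0$; for $x = 0^n$ one gets $c_{Q^n}(0^n) = 0$ and $s_{Q^n}(0^n) = \sin(\uppi) = 0$. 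The cross term $t_{\theta,\phi}(x)$ contains $\prod_i \sin\theta_i = 0$, so it never contributes. Hence $p_{Q^n}(1^n) = 1$ and $p_{Q^n}(x) = 0$ otherwise, and \cref{eq:payoff_Qn} follows from \cref{eq:payoff_simplifed} since only the $k=n$, $x = 1^n$ term survives.

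For the Nash-equilibrium analysis, by \cref{eq:nash_eqm_two}-style reasoning (generalized to the strategy set $\Theta$) I need: $Q^n$ is a Nash equilibrium iff no single player $a$ can strictly increase $\$_a$ by switching to some $(\theta_a,\phi_a) \in \Theta$ while all others keep playing $Q = (0,\uppi/n)$. So I fix player $a$, set $\theta_j = 0, \phi_j = \uppi/n$ for $j \neq a$, leave $(\theta_a, \phi_a) = (\theta, \phi)$ free, and compute $\$_a$ as a function of $(\theta,\phi)$. Again $\sin(\theta_j/2) = 0$ for $j \neq a$ kills $c_{\theta,\phi}(x)$ unless every coordinate except possibly $a$ is $1$---so the only strings with nonzero $c$ are $x = 1^n$ and $x = 1^n \oplus e_a$ (all ones except position $a$). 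Likewise $s_{\theta,\phi}(x)$ survives only for $x = 0^n$ and $x = e_a$. Plugging in, I expect $p(1^n) = \cos^2(\theta/2)$ (coming from $c$, with the $\phi$-sum being $\phi + (n-1)\uppi/n$; one must check the cosine evaluates correctly, perhaps using $\cos^2$ of that argument, but by symmetry of the construction it should collapse), $p(1^n\oplus e_a) = \sin^2(\theta/2)$, and $p(0^n) = p(e_a) = 0$ (since $s$ requires $\cos(\theta_j/2)\ne 0$-type factors but $\sin(\sum_{i:x_i=0}\phi_i)$ on $0^n$ gives $\sin(\phi + (n-1)\uppi/n)$ times $\prod_j \cos(\theta_j/2) = \cos^{n-1}(0)\cos(\theta/2)$---hold on, this does not obviously vanish). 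I should be careful here: the surviving strings and their probabilities need to be tracked precisely, including the cross term $t$ when $n$ is even. Assuming the main contributions are $p(1^n) = \cos^2(\theta/2)$ and $p(1^n \oplus e_a) = \sin^2(\theta/2)$ with the remaining mass on $0^n$ and $e_a$ possibly nonzero and $\phi$-dependent, the deviation payoff $\$_a(\theta,\phi)$ becomes a concrete trigonometric function; player $a$ is a non-volunteer in $x = 1^n \oplus e_a$ (payoff $2$) and in $e_a$ they volunteer (payoff $1$), etc.

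The crux---and the main obstacle---will be reducing the optimization $\max_{(\theta,\phi) \in \Theta} \$_a(\theta,\phi)$ to the inequality controlled by \cref{lem:sign_of_sin2pix-x}. I anticipate that after simplification the deviation payoff takes a form like $\$_a(\theta,\phi) = (2 - \tfrac1n) - (\text{something})\cdot\bigl(\sin^2(\theta/2) - \tfrac1n \cdot(\dots)\bigr)$, or more precisely that the gain from deviating is governed by a quantity proportional to $\sin^2(\uppi/n) - 1/n$ (note $\uppi/n$ and the polynomial $1/n$ echo exactly the $f(x) = \sin^2(\uppi x) - x$ of the lemma with $x = 1/n$). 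By the lemma, $f(1/n) < 0$ for $n \geq 10$ (since $1/n \leq 1/10$) and $f(1/n) > 0$ for $2 \leq n \leq 9$ (since $1/n \geq 1/9$); the sign flip is exactly what separates the two cases. So the plan is: derive the explicit deviation-payoff function, show its supremum over $\phi$ is attained at a specific value (likely $\phi$ chosen to make a cosine term $\pm1$), then show the resulting function of $\theta$ is maximized either at $\theta$ giving back $Q^n$ (when $f(1/n) > 0$, proving a)) or at some $\theta \neq 0$ strictly exceeding $2 - 1/n$ (when $f(1/n) < 0$, proving b) by exhibiting the profitable deviation explicitly). For part b) it suffices to exhibit one deviation with strictly larger payoff, which is easier than the full optimization; for part a) one genuinely needs the global bound over all $(\theta,\phi) \in \Theta$, and handling the even-$n$ cross term $t_{\theta,\phi}$ carefully there will be the fiddliest part.
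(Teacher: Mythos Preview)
Your overall plan matches the paper's proof: compute $p_{Q^n}$ directly, then fix all players but $a$ at $Q$, write $\$_a$ as an explicit function of $(\theta_a,\phi_a)$, optimize first in $\phi_a$ (bounding a cosine-squared by $1$) and then in $\theta_a$, and invoke \cref{lem:sign_of_sin2pix-x} with $x=1/n$ to separate $n\le 9$ from $n\ge 10$. Two points where your tentative computation goes astray, though neither derails the strategy:

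First, your guessed probabilities are not right. All four surviving strings carry mass, and the $\phi_a$-dependence sits on $1^n$ and $0^n$, not on $e_a$:
\begin{align*}
p(1^n) &= \cos^2\!\Big(\tfrac{\uppi}{n}-\phi_a\Big)\cos^2\!\tfrac{\theta_a}{2}, &
p(1^n\oplus e_a) &= \cos^2\!\tfrac{\uppi}{n}\,\sin^2\!\tfrac{\theta_a}{2},\\
p(0^n) &= \sin^2\!\Big(\tfrac{\uppi}{n}-\phi_a\Big)\cos^2\!\tfrac{\theta_a}{2}, &
p(e_a) &= \sin^2\!\tfrac{\uppi}{n}\,\sin^2\!\tfrac{\theta_a}{2}.
\end{align*}
Plugging into \cref{eq:payoff_simplifed} (note $0^n$ contributes zero and $e_a$ contributes with coefficient $1$) gives
\[
\$_a = \Big(1+\cos^2\tfrac{\uppi}{n}\Big)\sin^2\tfrac{\theta_a}{2} + \Big(2-\tfrac{1}{n}\Big)\cos^2\!\Big(\tfrac{\uppi}{n}-\phi_a\Big)\cos^2\tfrac{\theta_a}{2}.
\]
Now $\phi_a=\uppi/n$ maximizes the second term, after which the deficit from $2-\tfrac1n$ is exactly $\big(\sin^2\tfrac{\uppi}{n}-\tfrac1n\big)\sin^2\tfrac{\theta_a}{2}$, and the lemma finishes both parts; for b) the explicit profitable deviation is $(\theta_a,\phi_a)=(\uppi,\uppi/n)$, yielding $2-\sin^2(\uppi/n)>2-\tfrac1n$.

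Second, your concern about the even-$n$ cross term $t_{\theta,\phi}$ is unnecessary: since $c_{\theta,\phi}$ is supported on $\{1^n,\,1^n\oplus e_a\}$ and $s_{\theta,\phi}$ on $\{0^n,\,e_a\}$, their supports are disjoint (for $n\ge 2$), so $t\equiv 0$ regardless of parity. This is not the fiddly part; getting the four probabilities above right is.
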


\begin{proof}
To calculate the probability that each player volunteers, we use the expression in \cref{eq:prob_simplified} derived in \cref{lma3}. At the strategy profile $(\theta,\phi) = Q^n$, the functions defined in \cref{eq:function_c}, \cref{eq:function_s}, and \cref{eq:function_t} simplify as follows:
\begin{align}
    c_{Q^n}(x) &= \cos\left(\sum_{i:x_i=1} \frac{\uppi}{n}\right) \prod_{i:x_i=0} \sin\left(\frac{0}{2}\right) \prod_{i:x_i=1} \cos\left(\frac{0}{2}\right) 
    \nonumber\\
    &=
    \cos\left(\wt(x) \frac{\uppi}{n}\right) \prod_{i:x_i=0} 0
    \nonumber\\
    &=
    \cos\left(\wt(x) \frac{\uppi}{n}\right) \iverson{\forall i: x_i = 1}
    \nonumber\\
    &=
    \cos\left(\wt(x) \frac{\uppi}{n}\right) \iverson{\wt(x) = n}
    \nonumber\\
    &=
    \cos(\uppi)\iverson{\wt(x) = n} \nonumber\\
    &= 
    - \iverson{x = 1^n},
    \\
    s_{Q^n}(x) &= \sin\left(\sum_{i:x_i=0} \frac{\uppi}{n}\right) \prod_{i:x_i=0} \cos\left(\frac{0}{2}\right) \prod_{i:x_i=1} \sin\left(\frac{0}{2}\right) 
    \nonumber\\
    &=
    \sin\left(\wt(\bar x) \frac{\uppi}{n}\right) \prod_{i:x_i=1} 0
    \nonumber\\
    &=
    \sin\left(\wt(\bar x) \frac{\uppi}{n}\right) \iverson{\forall i: x_i = 0}
    \nonumber\\
    &=
    \sin\left(\wt(\bar x) \frac{\uppi}{n}\right) \iverson{\wt(\bar x) = n}
    \nonumber\\
    &=
    \sin(\uppi)\iverson{\wt(\bar x) = n} \nonumber\\
    &= 
    0,
    \\
    t_{Q_n}(1^n) &= c_{Q_n}(1^n)s_{Q_n}(1^n) = 0. 
\end{align}

Hence, $p_{Q^n}(x)= [x = 1^n]$, i.e., $p_{Q^n}(1^n) = 1$. In other words, every player volunteers with probability $1$. Substituting this expression into \cref{eq:payoff_simplifed}
yields the following payoff for each player $a \in [n]$:
\begin{align}
    \$_a(Q^n) = 
    \sum_{k=1}^n \left(2-\frac 1k\right) \left|\{ x \in \{0,1\}^n: x_a =1 \mbox{ and } x = 1^n \mbox{ and } \wt(x) = k\}\right|.
\end{align}
But the set $\left\{ x \in \{0,1\}^n: x_a =1 \mbox{ and } x = 1^n \mbox{ and } \wt(x) = k\right\}$ is empty unless $k=n$, in which case it is equal to the singleton set $\{1^n\}$. Hence,
\begin{align}
    \$_a(Q^n) = \sum_{k=1}^n \left(2-\frac 1k\right) \delta_{k,n}  = 2-\frac 1n,
\end{align}
where $\delta_{k,n}$ is the Kronecker delta.

Next, to determine whether $Q^n$ is a Nash equilibrium, we examine the scenario where a specific player $a \in [n]$ potentially deviates from $Q$, while all other players $i \neq a$ maintain the strategy $Q$. We begin by calculating the payoff for player $a$ under these conditions.
\begin{align}
&c_{\theta, \phi}(x) \Big|_{(\theta_i, \phi_i)=Q, \forall i \ne a} 
\nonumber\\&= \cos\left(\sum_{i:x_i=1} \phi_i\right) \cdot \prod_{i:x_i=0} \sin \frac{\theta_i}{2} \cdot \prod_{i:x_i=1} \cos \frac{\theta_i}{2} \Bigg|_{
\theta_i =0, \phi_i = \tfrac\uppi n, \forall i \neq a
} \nonumber\\
&= \cos\left(x_a\phi_a+\sum_{i:x_i=1, i \ne a} \phi_i\right) \sin^{1-x_a}\left( \frac{\theta_a}2 \right)\cos^{x_a}\left( \frac{\theta_a}2\right)\prod_{\substack{i:x_i=0 \\ i \ne a}}\sin\frac{\theta_i}{2} \prod_{\substack{i:x_i=1\\ i \ne a}} \cos\frac{\theta_i}{2} \Bigg|_{
\theta_i =0, \phi_i = \tfrac\uppi n, \forall i \neq a
}
\nonumber\\
&= \cos\left(x_a\phi_a+\sum_{i:x_i=1, i \ne a} \frac{\uppi}n \right) \sin^{1-x_a}\left( \frac{\theta_a}2 \right)\cos^{x_a}\left( \frac{\theta_a}2\right) \iverson{\forall i \ne a, x_i =1} \nonumber\\ 
&= \cos\left(x_a\phi_a+\frac{n-1}{n}\uppi\right) \sin^{1-x_a}\left( \frac{\theta_a}2 \right)\cos^{x_a}\left( \frac{\theta_a}2\right) \iverson{\forall i \ne a, x_i =1} \nonumber\\
&= -\cos\left( \frac{\uppi}n-x_a\phi_a\right) \sin^{1-x_a}\left( \frac{\theta_a}2 \right)\cos^{x_a}\left( \frac{\theta_a}2\right) \iverson{\forall i \ne a, x_i =1} \nonumber\\
&=\begin{cases} -\cos \left( \frac{\uppi}n\right)\sin\left( \frac{\theta_a}2\right), & \mbox{ for } x_a=0, x_i=1, \forall i \ne a \\
-\cos\left(\frac{\uppi}n-\phi_a \right)\cos\left( \frac{\theta_a}2\right), & \mbox{ for } x_a=1, x_i=1, \forall i \ne a \\ 0, & \text{ otherwise.}\end{cases}
\end{align}
Similarly,
\begin{align}
&s_{\theta, \phi}(x) \Big|_{(\theta_i, \phi_i)=Q, \forall i \ne a} \nonumber\\&= \sin\left(\sum_{i:x_i=0} \phi_i\right) \prod_{i:x_i=0} \cos \frac{\theta_i}{2} \prod_{i:x_i=1} \sin \frac{\theta_i}{2} \Bigg|_{\theta_i =0, \phi_i = \tfrac\uppi n, \forall i \neq a} \nonumber\\
&= \sin\left(\bar{x}_a\phi_a+\sum_{i:x_i=0, i \ne a} \phi_i\right) \cos^{1-x_a}\left( \frac{\theta_a}2 \right)\sin^{x_a}\left( \frac{\theta_a}2\right)\prod_{\substack{i:x_i=0\\ i \ne a}} \cos \frac{\theta_i}{2} \prod_{\substack{i:x_i=1\\ i \ne a}} \sin\frac{\theta_i}{2} \Bigg|_{\theta_i =0, \phi_i = \tfrac\uppi n, \forall i \neq a} \nonumber\\
&= \sin\left(\bar{x}_a\phi_a+\sum_{i:x_i=0, i \ne a} \frac\uppi n\right) \cos^{1-x_a}\left( \frac{\theta_a}2 \right)\sin^{x_a}\left( \frac{\theta_a}2\right) \iverson{\forall i \ne a, x_i =0} \nonumber\\ 
&= \sin\left(\bar{x}_a\phi_a+\frac{n-1}{n}\uppi\right) \cos^{1-x_a}\left( \frac{\theta_a}2\right)\sin^{x_a}\left( \frac{\theta_a}2\right) \iverson{\forall i \ne a, x_i =0} \nonumber\\
&= \sin\left( \frac{\uppi}n-\bar{x}_a\phi_a\right) \cos^{1-x_a}\left( \frac{\theta_a}2 \right)\sin^{x_a}\left( \frac{\theta_a}2\right) \iverson{\forall i \ne a, x_i =1} \nonumber\\
&=\begin{cases} \sin \left( \frac{\uppi}n-\phi_a\right)\cos\left( \frac{\theta_a}2\right), & \mbox{ for } x_a=0, x_i=0, \forall i \ne a \\
\sin\left(\frac{\uppi}n \right)\sin\left( \frac{\theta_a}2\right), & \mbox{ for } x_a=1, x_i=0, \forall i \ne a \\ 0, & \text{ otherwise.}\end{cases}
\end{align}

Since $c_{\theta, \phi}(x) \Big|_{(\theta_i, \phi_i)=Q, \forall i \ne a}$ and $s_{\theta, \phi}(x) \Big|_{(\theta_i, \phi_i)=Q, \forall i \ne a}$ have disjoint supports, it follows that their product $t_{\theta, \phi}(x) \Big|_{(\theta_i, \phi_i)=Q, \forall i \ne a}$ vanishes. Hence, the probability of observing the outcome $x$ is 
\begin{align}
p_{\theta, \phi}(x) \Big|_{(\theta_i, \phi_i)=Q, \forall i \ne a}
&= \cos^2\left( \frac{\uppi}{n}-x_a\phi_a \right)\sin^{2(1-x_a)}\left(\frac{\theta_a}2 \right)\cos^{2x_a}\left( \frac{\theta_a}2\right)
\iverson{\forall i \ne a: x_i=1}
\nonumber\\&\quad + \sin^2\left( \frac{\uppi}{n}-\bar{x}_a\phi_a \right)\cos^{2(1-x_a)}\left(\frac{\theta_a}2 \right)\sin^{2x_a}\left( \frac{\theta_a}2\right)\iverson{\forall i \ne a: x_i=0} \nonumber\\
&= \begin{cases} \cos^2\left( \frac{\uppi}n\right)\sin^2\left( \frac{\theta_a}2\right), & \text{ for } x_a=0, x_i=1, \forall i \ne a 
\ (\text{i.e.,}~ x=11\ldots1\underset{\underset{a}{\uparrow}}{0}11\ldots 1) 
\\
\cos^2\left( \frac{\uppi}n-\phi_a\right)\cos^2\left( \frac{\theta_a}2\right), & \text{ for } x_a=1, x_i=1, \forall i \ne a 
\ (\mbox{i.e.}~ x=1^n) 
\\
\sin^2\left( \frac{\uppi}n-\phi_a\right)\cos^2\left( \frac{\theta_a}2\right), & \text{ for } x_a=0, x_i=0, \forall i \ne a 
\ (\mbox{i.e.,}~ x=0^n) 
\\
\sin^2\left( \frac{\uppi}n\right)\sin^2\left( \frac{\theta_a}2\right), & \text{ for } x_a=1, x_i=0, \forall i \ne a 
\ (\mbox{i.e.,}~ x=00\ldots0\underset{\underset{a}{\uparrow}}{1}0\ldots 00)
\\
0, & \text{ otherwise,}
\end{cases}
\end{align}
where the upward arrows indicate that the 0 or 1, respectively, is located at the $a$-th position in the binary string.

Therefore, the payoff of player $a$ is
\begin{align*}
\$_a(\theta, \phi)\Big|_{(\theta_i, \phi_i)=Q, \forall i \ne a} &= 2\sum_{\substack{x \in \{0,1\}^n \\ x \neq 0^n \\ x_a = 0}} p_{\theta,\phi}(x) + \sum_{k = 1}^n \left(2-\frac{1}{k}\right)\sum_{\substack{x \in \{0,1\}^n \\ x_a = 1 \\ \wt(x) = k}} p_{\theta,\phi}(x)\Bigg|_{(\theta_i, \phi_i)=Q,  \forall i \ne a} \\
&= 2p_{\theta, \phi}(1\ldots 1\underset{\underset{a}{\uparrow}}{0}1\ldots1) + \left(2-\frac11\right)p_{\theta, \phi}(0,\ldots 0\underset{\underset{a}{\uparrow}}{1}0\ldots 0) + \left(2-\frac1n \right) p_{\theta, \phi}(1^n)\Bigg|_{(\theta_i, \phi_i)=Q, \forall i \ne a} \\
&= 2 \cos^2 \left( \frac{\uppi}n\right)\sin^2 \left(\frac{\theta_a}2 \right) + \sin^2\left( \frac{\uppi}n\right)\sin^2\left(\frac{\theta_a}2 \right) + \left( 2-\frac1n\right)\cos^2\left(\frac{\uppi}n-\phi_a \right)\cos^2\left( \frac{\theta_a}2\right)\\
&=\left(1 + \cos^2 \frac{\uppi}{n}\right) \sin^2 \frac{\theta_a}{2} + \left(2-\frac{1}{n}\right) \cos^2\left(\frac{\uppi}{n} - \phi_a\right) \cos^2 \frac{\theta_a}{2}.
\end{align*}

\begin{enumerate}
    \item Consider the case where $2 \le n \le 9$. For any $a\in [n]$, we will show that the payoff of the $a$-th player, when all other players adopt the strategy $Q$, is bounded above by $2-\frac 1n = \$_a(Q^n)$. Indeed,

\begin{align*}
    \$_a(\theta, \phi)\Big|_{(\theta_i, \phi_i)=Q, \forall i \ne a} &=  \left(1 + \cos^2 \frac{\uppi}{n} \right)\sin^2 \frac{\theta_a}{2} + \left(2-\frac{1}{n}\right) \underbrace{\cos^2\left(\frac{\uppi}{n} - \phi_a\right)}_{\leq 1} \cos^2\frac{\theta_a}{2} \\
    & \le \left(1 + \cos^2 \frac{\uppi}{n} \right) \sin^2 \frac{\theta_a}{2} + \left(2-\frac{1}{n}\right) \left( 1-\sin^2 \frac{\theta_a}{2} \right) \\
    &= 2-\frac1n - \left( 1-\cos^2\frac{\uppi}n-\frac1n\right) \sin^2 \left( \frac{\theta_a}2\right)
    \\&= 2-\frac1n  -  \underbrace{\left( \sin^2 \frac{\uppi}n - \frac1n
    \right)
    }_{>0}
    \underbrace{\sin^2\left( \frac{\theta_a}2\right)}_{\geq 0} \\
    &\le 2- \frac1n = \$_a(Q^n),
\end{align*}
where the inequality on the last line follows from Lemma~\ref{lem:sign_of_sin2pix-x} with $x=\frac 1n$. Therefore, the strategy $Q$ provides the optimal payoff for player $a$ when all other players also choose the strategy $Q$. Since the same reasoning applies to every player $a$, the strategy profile $Q^n$ is a Nash equilibrium.

\item Consider the case where $n\geq 10$. For any player $a\in [n]$, we will demonstrate that if all other players adopt the strategy $Q$, there exists an alternative strategy for player $a$ that results in a higher payoff than if they were to choose $Q$, where the payoff from choosing $Q$ would be $2-\frac1n = \$_a (Q^n)$. This shows that $Q^n$ cannot be a Nash equilibrium.

Indeed, an example of an alternative strategy is the choice $(\theta_a, \phi_a) =(\uppi, \frac{\uppi}n)$. With this strategy, the payoff for player $a$ evaluates to
\begin{align}
\$_a(\theta, \phi) \Bigg|_{\substack{(\theta_i, \phi_i)=Q, \forall i \ne a, \\ \theta_a=\uppi, \phi_a =\uppi/n}} &= \left(1 + \cos^2 \frac{\uppi}{n}
\right) \sin^2 \frac{\uppi}{2} + \left(2-\frac{1}{n}\right) \cos^2\left(\frac{\uppi}{n} - \frac{\uppi}{n}\right) \cos^2 \frac{\uppi}{2} \nonumber\\
&= 1 + \cos^2 \left( \frac{\uppi}n\right) \nonumber\\
&= 2-\sin^2 \left(\frac{\uppi}n \right) \nonumber\\
&> 2-\frac1n = \$_a(Q^n),
\end{align}
where the inequality on the last line follows from Lemma \ref{lem:sign_of_sin2pix-x}. Hence $Q^n$ is not a Nash equilibrium for $n\geq 10$.
\end{enumerate}
By combining the results above, we conclude that $Q^n$ is a Nash equilibrium if and only if $n\leq 9$.
\end{proof}

\subsection{Symmetric Nash equilibria for even \texorpdfstring{$n$}{n}}
\label{sec:strategy_A}

In this section, we shift our focus from the previously discussed symmetric Nash equilibria $Q^n$ to a new family of symmetric Nash equilibria for cases with an even number of players. In this new strategy profile, each player adopts the strategy $A=(0, \frac{\uppi}2) \in \Theta$, resulting in a collective strategy profile $A^n = (A,\ldots,A) \in \Theta^n$. This choice corresponds to each player applying the unitary operator
\begin{align}
    U\left(0,\tfrac{\uppi}{2}\right) = \begin{pmatrix} 
    \i & 0 \\
    0 & -\i 
    \end{pmatrix}= \i Z
\end{align}
to their respective quantum systems, where $Z = \left(\begin{smallmatrix}
    1 & 0 \\ 0 & -1
\end{smallmatrix}\right)$ is the Pauli-Z matrix. Like $Q^n$, we will show that this new strategy profile yields a payoff of $2-\frac 1n$, surpassing the payoff of the symmetric Nash equilibrium in the classical volunteer's dilemma.

\begin{theorem}
\label{theorem10}
In the $n$-player quantum volunteer's dilemma $G^{(n)}_{\mathrm{QVD}}$, if the players adopt the strategy profile $A^n$, then every player volunteers with probability $1$ when $n$ is even, while every player abstains with probability $1$ when $n$ is odd, i.e.,
\begin{align}
    p_{A^n}(x) = \iversontext{$n$ even}\iverson{x=1^n} + \iversontext{$n$ odd}\iverson{x=0^n},
    \label{eq:probability_An}
\end{align}
and the corresponding payoff of each player $a\in [n]$ is
    \begin{align}
        \$_a(A^n) = 
        \left(2 -\frac 1n \right)\iversontext{$n$ even}.
        \label{eq:payoff_An}
    \end{align}
Moreover,
\begin{enumerate} \item if $n$ is even, then $A^n$ is a Nash equilibrium of $G^{(n)}_{\mathrm{QVD}}$, and
\item if $n$ is odd, then $A^n$ is not a Nash equilibrium of $G^{(n)}_{\mathrm{QVD}}$. 
\end{enumerate}
\end{theorem}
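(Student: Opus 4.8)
The plan is to follow the template of the proof of \cref{thm:QVD_main}, substituting the profile $A^n$ (and its one-player deviations) into the formulas of \cref{lma3} rather than doing anything genuinely new. \textbf{Distribution and payoff at $A^n$.} First I would set $(\theta,\phi)=A^n$, i.e.\ $\theta_i=0$ and $\phi_i=\uppi/2$ for every $i$, in the auxiliary functions \cref{eq:function_c,eq:function_s,eq:function_t}. Since every $\theta_i=0$, the factor $\prod_{i:x_i=0}\sin(\theta_i/2)$ kills $c_{\theta,\phi}(x)$ unless $x=1^n$, and the factor $\prod_{i:x_i=1}\sin(\theta_i/2)$ kills $s_{\theta,\phi}(x)$ unless $x=0^n$; the surviving $\cos(\sum_{i:x_i=1}\uppi/2)$, resp.\ $\sin(\sum_{i:x_i=0}\uppi/2)$, then gives $c_{A^n}(x)=\cos(n\uppi/2)\,\iverson{x=1^n}$ and $s_{A^n}(x)=\sin(n\uppi/2)\,\iverson{x=0^n}$, so $t_{A^n}\equiv 0$ and, by \cref{eq:prob_simplified},
\[
p_{A^n}(x)=\cos^2\!\left(\tfrac{n\uppi}{2}\right)\iverson{x=1^n}+\sin^2\!\left(\tfrac{n\uppi}{2}\right)\iverson{x=0^n}.
\]
Splitting on the parity of $n$ collapses this to \cref{eq:probability_An} (cosine term for even $n$, sine term for odd $n$), and feeding the resulting degenerate distribution into \cref{eq:payoff_simplifed} yields \cref{eq:payoff_An}: for even $n$ the unique string $1^n$ lies in the weight-$n$ block and contributes $2-1/n$, while for odd $n$ the only mass sits on $0^n$, which contributes to neither sum.

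\textbf{Nash analysis.} For the equilibrium claims, fix a player $a$, keep $(\theta_i,\phi_i)=A$ for all $i\neq a$, and let $a$ play an arbitrary $(\theta_a,\phi_a)$. Exactly as in \cref{thm:QVD_main}, the $\theta_i=0$ factors force $c$ to be supported on $\{1^n,\,1^n\oplus e_a\}$ and $s$ on $\{0^n,\,e_a\}$ (here $1^n\oplus e_a$ is $1^n$ with the $a$-th bit flipped to $0$); these two sets are disjoint, so again $t\equiv 0$ and $p=c^2+s^2$ lives on those four strings, with $p(1^n)=\cos^2(\phi_a+\tfrac{(n-1)\uppi}{2})\cos^2\tfrac{\theta_a}{2}$, $p(1^n\oplus e_a)=\cos^2(\tfrac{(n-1)\uppi}{2})\sin^2\tfrac{\theta_a}{2}$, $p(e_a)=\sin^2(\tfrac{(n-1)\uppi}{2})\sin^2\tfrac{\theta_a}{2}$, and $p(0^n)=\sin^2(\phi_a+\tfrac{(n-1)\uppi}{2})\cos^2\tfrac{\theta_a}{2}$. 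Matching each string to the right branch of $\$_i^{\mathrm{VD}}$ (the weight-$(n-1)$ string $1^n\oplus e_a$ is an abstention with others volunteering, payoff $2$; $1^n$ gives $2-1/n$; $e_a$ is a lone volunteer, payoff $2-1/1=1$; $0^n$ gives $0$) and summing via \cref{eq:payoff_simplifed} collapses to
\[
\$_a(\theta,\phi)\big|_{(\theta_i,\phi_i)=A,\ \forall i\neq a}=\left[1+\cos^2\!\left(\tfrac{(n-1)\uppi}{2}\right)\right]\sin^2\tfrac{\theta_a}{2}+\left(2-\tfrac1n\right)\cos^2\!\left(\phi_a+\tfrac{(n-1)\uppi}{2}\right)\cos^2\tfrac{\theta_a}{2}.
\]

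\textbf{Closing by parity.} When $n$ is even, $(n-1)\uppi/2$ is an odd multiple of $\uppi/2$, so $\cos^2(\tfrac{(n-1)\uppi}{2})=0$ and $\cos^2(\phi_a+\tfrac{(n-1)\uppi}{2})=\sin^2\phi_a\le 1$; bounding the latter by $1$ and writing $\cos^2\tfrac{\theta_a}{2}=1-\sin^2\tfrac{\theta_a}{2}$ gives $\$_a\le (2-\tfrac1n)-(1-\tfrac1n)\sin^2\tfrac{\theta_a}{2}\le 2-\tfrac1n=\$_a(A^n)$ because $1-\tfrac1n>0$, so no unilateral deviation helps and $A^n$ is a Nash equilibrium. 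When $n$ is odd, $(n-1)\uppi/2$ is an integer multiple of $\uppi$, so $\cos^2(\tfrac{(n-1)\uppi}{2})=1$ and the deviation $\theta_a=\uppi$ gives $\$_a=2>0=\$_a(A^n)$, so $A^n$ is not a Nash equilibrium. I do not anticipate a real obstacle: unlike the $Q^n$ analysis the relevant coefficient $\cos^2((n-1)\uppi/2)$ vanishes \emph{exactly} for even $n$ (rather than being $\cos^2(\uppi/n)$), so the inequality closes immediately with no analogue of \cref{lem:sign_of_sin2pix-x}; the only point needing care is the bookkeeping of which binary strings acquire nonzero probability under a deviation and pairing each with the correct case of $\$_i^{\mathrm{VD}}$. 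As a sanity check, for $n=2$ one has $A=Q$, so this recovers the $n=2$ instance of \cref{thm:QVD_main}.
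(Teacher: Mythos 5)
Your proposal is correct and follows essentially the same route as the paper's proof: evaluate $c$, $s$, $t$, $p$ at $A^n$ via \cref{lma3}, then analyze a single player's deviation, with the only cosmetic difference being that you carry the phase $(n-1)\uppi/2$ symbolically and split on parity at the end rather than upfront. All four deviation probabilities, the resulting payoff expression, and both parity cases match the paper's computation.
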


\begin{proof}
To calculate the probability that each player volunteers, we use the expression in \cref{eq:prob_simplified} derived in \cref{lma3}. At the strategy profile $(\theta,\phi) = A^n$, the functions defined in \cref{eq:function_c}, \cref{eq:function_s}, and \cref{eq:function_t} simplify as follows:
\begin{align}
    c_{A^n}(x) &= \cos\left(\sum_{i:x_i=1} \frac{\uppi}{2}\right) \prod_{i:x_i=0} \sin\left(\frac{0}{2}\right) \prod_{i:x_i=1} \cos\left(\frac{0}{2}\right) 
    \nonumber\\
    &=
    \cos\left(\wt(x) \frac{\uppi}{2}\right) \prod_{i:x_i=0} 0
    \nonumber\\
    &=
    \cos\left(\wt(x) \frac{\uppi}{2}\right) \iverson{\forall i: x_i = 1}
    \nonumber\\
    &=
    \cos\left(\wt(x) \frac{\uppi}{2}\right) \iverson{\wt(x) = n}
    \nonumber\\
    &=
    \cos\left(\frac{n \uppi}{2}\right)\iverson{\wt(x) = n} \nonumber\\
    &= \iverson{\wt(x) = n} \times \begin{cases}  0, & \mbox{for odd } n \\ 1, & \mbox{for } n \equiv 0 \pmod{4} \\ -1, & \mbox{for } n \equiv 2 \pmod{4} \end{cases}\nonumber\\
    &= 
    (-1)^{n/2} \iversontext{$n$ even} \iverson{x = 1^n}.
\end{align}
Similarly,
\begin{align}
    s_{A^n}(x) &= \sin\left(\sum_{i:x_i=0} \frac{\uppi}{2}\right) \prod_{i:x_i=0} \cos\left(\frac{0}{2}\right) \prod_{i:x_i=1} \sin\left(\frac{0}{2}\right) 
    \nonumber\\
    &=
    \sin\left(\wt(\bar x) \frac{\uppi}{2}\right) \prod_{i:x_i=1} 0
    \nonumber\\
    &=
    \sin\left(\wt(\bar x) \frac{\uppi}{2}\right) \iverson{\forall i: x_i = 0}
    \nonumber\\
    &=
    \sin\left(\wt(\bar x) \frac{\uppi}{2}\right) \iverson{\wt(\bar x) = n}
    \nonumber\\
    &=
    \sin\left(\frac{n \uppi}{2}\right)\iverson{\wt(x) = 0} \nonumber\\
    &= \iverson{\wt(x) = 0} \times \begin{cases}  0, & \mbox{for even } n \\ 1, & \mbox{for } n \equiv 1 \pmod{4} \\ -1, & \mbox{for } n \equiv 3 \pmod{4} \end{cases}\nonumber\\
    &= 
    (-1)^{\frac{n-1}2} \iversontext{$n$ odd} \iverson{x = 0^n}.
\end{align}

Since $c_{A^n}(x)$ vanishes whenever $n$ is odd and $s_{A^n}(x)$ vanishes whenever $n$ is even, their product $t_{A^n}(x)$ must always vanish regardless of $n$, i.e., $t_{A^n}(x)=0$. Consequently, the probability expression in \cref{eq:probability_compact}
simplifies to
\begin{align}
    p_{A^n}(x) &= c_{A^n}^2(x) + s_{A^n}^2(x) \nonumber\\
    &= \left(
    (-1)^{n/2} \iversontext{$n$ even} \iverson{x = 1^n}
    \right)^2 + \left(
(-1)^{\frac{n-1}2} \iversontext{$n$ odd} \iverson{x = 0^n}
    \right)^2
    \nonumber\\
    &=
    \iversontext{$n$ even}\iverson{x=1^n} + \iversontext{$n$ odd}\iverson{x=0^n},
    \label{eq:probability_An_derivation}
\end{align}
which gives the expression in \cref{eq:probability_An}. In other words, when the number of players $n$ is even, then every player would volunteer with probability $1$, and when the number of players is odd, then every player would abstain with probability $1$.

Substituting \cref{eq:probability_An_derivation} into
\cref{eq:payoff_simplifed} gives
\begin{align}
    \$_a(A^n) &= 2\sum_{\substack{x \in \{0,1\}^n \\ x \neq 0^n \\ x_i = 0}} 
    \left(\iversontext{$n$ even} \underbrace{\iverson{x=1^n}}_{=0}  + \iversontext{$n$ odd}\underbrace{\iverson{x=0^n}}_{=0}
    \right) \nonumber\\
    &\quad
    + \sum_{k = 1}^n \left(2-\frac{1}{k}\right)\sum_{\substack{x \in \{0,1\}^n \\ x_i = 1 \\ \wt(x) = k}} \left(\iversontext{$n$ even}\iverson{x=1^n} + \iversontext{$n$ odd}\underbrace{\iverson{x=0^n}}_{=0}\right) \nonumber\\
    &=
    \iversontext{$n$ even} \sum_{k=1}^n \left(2-\frac 1k\right) \delta_{kn}
    \nonumber\\
    &=
    \left(2-\frac 1n\right) \iversontext{$n$ even},
\end{align}
which completes the derivation of \cref{eq:payoff_An}.

Next, to determine whether $A^n$ is a Nash equilibrium, we consider the scenario where a specific player $a \in [n]$ potentially deviates from the strategy $A$, while all other players $i \neq a$ continue to follow the strategy $A$. Under these conditions, the payoff for player $a$ may be calculated as follows:
\begin{align}
&c_{\theta, \phi}(x) \Big|_{(\theta_i, \phi_i)=A, \forall i \ne a} \nonumber\\&= \cos\left(\sum_{i:x_i=1} \phi_i\right) \prod_{i:x_i=0} \sin \frac{\theta_i}{2} \prod_{i:x_i=1} \cos\frac{\theta_i}{2} \Bigg|_{\theta_i=0, \phi_i=\frac \uppi 2, \forall i \ne a} \nonumber\\
&= \cos\left(x_a\phi_a+\sum_{i:x_i=1, i \ne a} \phi_i\right) \sin^{1-x_a}\left( \frac{\theta_a}2 \right)\cos^{x_a}\left( \frac{\theta_a}2\right)\prod_{\substack{i:x_i=0 \\ i \ne a}}\sin\frac{\theta_i}{2} \prod_{\substack{i:x_i=1\\ i \ne a}} \cos\frac{\theta_i}{2} \Bigg|_{
\theta_i =0, \phi_i = \tfrac\uppi 2, \forall i \neq a
}
\nonumber\\
&= \cos\left(x_a\phi_a+\sum_{i:x_i=1, i \ne a} \frac\uppi 2\right) \sin^{1-x_a}\left( \frac{\theta_a}2 \right)\cos^{x_a}\left( \frac{\theta_a}2\right) \iverson{\forall i \ne a, x_i =1} \nonumber\\ 
&= \cos\left(x_a\phi_a+(n-1)\frac{\uppi}2\right) \sin^{1-x_a}\left( \frac{\theta_a}2 \right)\cos^{x_a}\left( \frac{\theta_a}2\right) \iverson{\forall i \ne a, x_i =1} \nonumber\\
&= \left( (-1)^{\frac n2}\sin(x_a\phi_a)\iverson{n \text{ even}}  + (-1)^{\frac {n-1}2}\cos(x_a\phi_a)\iverson{n \text{ odd}} \right) \sin^{1-x_a}\left( \frac{\theta_a}2 \right)\cos^{x_a}\left( \frac{\theta_a}2\right) \iverson{\forall i \ne a, x_i =1} ,
\label{eq:c_strategy_A}
\end{align}
where the last line follows from the identity:
\begin{align}
    \cos\left(x\phi + (n-1)\frac \uppi 2\right) &=
    \begin{cases}
        \sin(x\phi), & \mbox{for } n \equiv 0 \pmod{4} \\
        \cos(x\phi), & \mbox{for } n \equiv 1 \pmod{4} \\
        -\sin(x\phi), & \mbox{for } n \equiv 2 \pmod{4} \\
        -\cos(x\phi), & \mbox{for } n \equiv 3 \pmod{4} \\
    \end{cases}
    \nonumber\\
    &= (-1)^{\frac n2}\sin(x\phi)\iverson{n \text{ even}}  + (-1)^{\frac{n-1}2}\cos(x\phi)\iverson{n \text{ odd}}.
\end{align}
Similarly,
\begin{align}
&s_{\theta, \phi}(x) \Big|_{(\theta_i, \phi_i)=A, \forall i \ne a} \nonumber\\&= \sin\left(\sum_{i:x_i=0} \phi_i\right) \prod_{i:x_i=0} \cos \frac{\theta_i}{2} \prod_{i:x_i=1} \sin \frac{\theta_i}{2} \Bigg|_{\theta_i=0, \phi_i=\frac \uppi 2, \forall i \ne a} \nonumber\\
&= \sin\left(\bar x_a\phi_a+\sum_{i:x_i=0, i \ne a} \phi_i\right) \cos^{1-x_a}\left( \frac{\theta_a}2 \right)\sin^{x_a}\left( \frac{\theta_a}2\right)\prod_{\substack{i:x_i=0 \\ i \ne a}}\cos\frac{\theta_i}{2} \prod_{\substack{i:x_i=1\\ i \ne a}} \sin\frac{\theta_i}{2} \Bigg|_{
\theta_i =0, \phi_i = \tfrac\uppi 2, \forall i \neq a
}
\nonumber\\
&= \sin\left(\bar x_a\phi_a+\sum_{i:x_i=0, i \ne a} \frac\uppi 2\right) \cos^{1-x_a}\left( \frac{\theta_a}2 \right)\sin^{x_a}\left( \frac{\theta_a}2\right) \iverson{\forall i \ne a, x_i =0} \nonumber\\ 
&= \sin\left(\bar x_a\phi_a+(n-1)\frac{\uppi}2\right) \cos^{1-x_a}\left( \frac{\theta_a}2 \right)\sin^{x_a}\left( \frac{\theta_a}2\right) \iverson{\forall i \ne a, x_i =0} \nonumber\\
&= \left( -(-1)^{\frac n2}\cos(\bar x_a\phi_a)\iverson{n \text{ even}}  + (-1)^{\frac{n-1}2}\sin(\bar x_a\phi_a)\iverson{n \text{ odd}} \right) \cos^{1-x_a}\left( \frac{\theta_a}2 \right)\sin^{x_a}\left( \frac{\theta_a}2\right) \iverson{\forall i \ne a, x_i =0} ,
\label{eq:s_strategy_A}
\end{align}
where the last line follows from the identity:
\begin{align}
    \sin\left(x\phi + (n-1)\frac \uppi 2\right) &=
    \begin{cases}
        -\cos(x\phi), & \mbox{for } n \equiv 0 \pmod{4} \\
        \sin(x\phi), & \mbox{for } n \equiv 1 \pmod{4} \\
        \cos(x\phi), & \mbox{for } n \equiv 2 \pmod{4} \\
        -\sin(x\phi), & \mbox{for } n \equiv 3 \pmod{4} \\
    \end{cases}
    \nonumber\\
    &= -(-1)^{\frac n2}\cos(x\phi)\iverson{n \text{ even}}  + (-1)^{\frac{n-1}2}\sin(x\phi)\iverson{n \text{ odd}}.
\end{align}

Since \cref{eq:c_strategy_A} and 
\cref{eq:s_strategy_A} are nonzero for disjoint sets of $x$, their product $t_{\theta, \phi}(x) |_{(\theta_i, \phi_i)=(0, \uppi/2), \forall i \ne a}=0$. Hence, the corresponding probability evaluates to
\begin{align}
& p_{\theta, \phi}(x) \Big|_{(\theta_i, \phi_i)=A, \forall i \ne a} = c_{\theta, \phi}(x)^2 \Big|_{(\theta_i, \phi_i)=A, \forall i \ne a} + s_{\theta, \phi}(x)^2 \Big|_{(\theta_i, \phi_i)=A, \forall i \ne a} \nonumber\\
&= \left(
\sin^2(x_a \phi_a) \iversontext{$n$ even} + \cos^2(x_a \phi_a) \iversontext{$n$ odd} 
\right)
\sin^{2(1-x_a)} \left(\frac{\theta_a}2\right) \cos^{2 x_a} \left(\frac{\theta_a}2\right) \iverson{\forall i \neq a, x_i = 1} \nonumber\\
&\quad+
\left( 
\cos^2(\bar x_a \phi_a) \iversontext{$n$ even} + \sin^2(\bar x_a \phi_a) \iversontext{$n$ odd} 
\right)
\cos^{2(1-x_a)} \left(\frac{\theta_a}2\right) \sin^{2 x_a} \left(\frac{\theta_a}2\right) \iverson{\forall i \neq a, x_i = 0}.
\end{align}

We will now consider the cases where $n$ is even and $n$ is odd separately. 

\begin{enumerate}
    \item When $n$ is even, the probability distribution simplifies to
\begin{align}
p_{\theta, \phi}(x)\Big|_{(\theta_i, \phi_i)=A, \forall i \ne a} &= \sin^2(x_a \phi_a)\sin^{2(1-x_a)}\left(\frac{\theta_a}2\right) \cos^{2x_a}\left(\frac{\theta_a}2 \right) \iverson{\forall i \ne a, x_i = 1} \nonumber\\
&\quad + \cos^2 (\bar{x}_a \phi_a) \cos^{2(1-x_a)} \left(\frac{\theta_a}2\right) 
\sin^{2x_a}\left(\frac{\theta_a}2\right)\iverson{\forall i \ne a, x_i = 0} \\
&= \begin{cases} \sin^2(\phi_a)  \cos^2\left(\frac{\theta_a}2\right) & \mbox{for } x=1^n \\
\cos^2(\phi_a) \cos^2\left(\frac{\theta_a}2\right) & \mbox{for }x=0^n \\
\sin^2\left(\frac{\theta_a}2\right) & \mbox{for } x_a=1, x_i=0, \forall i \ne a \\
0 & \text{otherwise}.
\end{cases}
\end{align}
Hence, the payoff for player $a$ evaluates to
\begin{align}
\$_a(\theta, \phi)\Big|_{(\theta_i, \phi_i)=A, \forall i \ne a} &= \sin^2 \left( \frac{\theta_a}2\right) + \left(2-\frac1n\right) \sin^2(\phi_a) \cos^2\left(\frac{\theta_a}2\right) \nonumber\\
& \le \sin^2 \left( \frac{\theta_a}2\right) + \left(2-\frac1n\right)  \cos^2\left(\frac{\theta_a}2 \right) \nonumber\\
& \le 2-\frac1n = \$_a(A^n).
\end{align}
Therefore, the strategy $A$ provides the optimal payoff for player $a$ when all other players also choose the strategy $A$. Since the same reasoning applies to every player $a$, the strategy profile $A^n$ is a Nash equilibrium.

\item When $n$ is odd, the probability distribution simplifies to
\begin{align}
p_{\theta, \phi}(x)\Big|_{(\theta_i, \phi_i)=A, \forall i \ne a}
&= \cos^2(x_a \phi_a)\sin^{2(1-x_a)}\left(\frac{\theta_a}2\right) \cos^{2x_a}\left(\frac{\theta_a}2\right) \iverson{\forall i \ne a, x_i = 1} \nonumber\\
&\quad + \sin^2 (\bar{x}_a \phi_a)  
\cos^{2(1-x_a)} \left(\frac{\theta_a}2\right)\sin^{2x_a}\left(\frac{\theta_a}2\right) \iverson{\forall i \ne a, x_i = 0} \nonumber\\
&= \begin{cases}  \sin^2(\frac {\theta_a} 2), & \mbox{for } x_a=0, x_i = 1, \forall i \ne a \\
\cos^2(\phi_a) \cos^2\left(\frac{\theta_a}2\right), & \mbox{for } x=1^n \\
\sin^2(\phi_a)\cos^2\left(\frac{\theta_a}2\right), & \mbox{for } x=0^n \\
0, & \text{otherwise}.
\end{cases}
\end{align}
Hence, the payoff for player $a$ evaluates to
\begin{align}
\$_a(\theta, \phi)\Big|_{(\theta_i, \phi_i)=A, \forall i \ne a} &= 2 \sin^2 \left( \frac{\theta_a}2\right) + \left(2-\frac1n\right) \cos^2(\phi_a) \cos^2\left(\frac{\theta_a}2\right).
\label{eq:payoff_a_odd}
\end{align}

Now, suppose that player $a$ chooses the parameters $\theta_a = \uppi$ and $\phi_a = 0$. Then
\cref{eq:payoff_a_odd} evaluates to
\begin{align}
\$_a(\theta, \phi)\Big|_{(\theta_i, \phi_i)=A, \forall i \ne a; \theta_a = \uppi, \phi_a = 0} &= 2,
\end{align}
which exceeds the payoff of $0 = \$_a(A^n)$ that they would receive by following strategy $A$. Hence, when $n$ is odd, $A^n$ is not a Nash equilibrium.
\end{enumerate}
By combining the results above, we conclude that $A^n$ is a Nash equilibrium if and only if $n$ is even.
\end{proof}

\subsection{Pareto optimality}

In \cref{sec:strategy_Q,sec:strategy_A}, we presented two distinct families of symmetric Nash equilibria. In this section, we will demonstrate that these Nash equilibria are also Pareto optimal, as defined earlier in \cref{def:pareto_optimality}. More broadly, we will prove that any strategy profile where every player receives a payoff of $2-\frac 1n$ must be Pareto optimal. This includes the strategy profile $Q^n$, as well as the strategy profile $A^{n}$ when $n$ is even, both of which satisfy this payoff condition.
\begin{theorem}
\label{thm:generalpareto}
    Let $s \in \Theta^n$ be any strategy profile for which $\$_a(s) = 2-\frac{1}{n}$ for all $a \in [n]$. Then $s$ is Pareto optimal in the $n$-player quantum volunteer's dilemma $G^{(n)}_{\mathrm{QVD}}$.
\end{theorem}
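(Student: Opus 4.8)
The plan is to exploit a global constraint: the sum of all $n$ players' payoffs at any strategy profile is bounded above by $2n-1$, and this bound is attained precisely when the all-abstain outcome $0^n$ carries zero probability --- in particular at $s$, where each payoff equals $2-\tfrac1n$ so the total is exactly $n\bigl(2-\tfrac1n\bigr)=2n-1$.

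First I would compute $\sum_{a=1}^n \$_a(t)$ for an arbitrary profile $t=(\theta,\phi)\in\Theta^n$. Starting from \cref{eq:expected_quantum_payoffs} and interchanging the two sums,
\begin{align}
\sum_{a=1}^n \$_a(\theta,\phi) = \sum_{x\in\{0,1\}^n} p_{\theta,\phi}(x)\,\sum_{a=1}^n \$_a^{\mathrm{VD}}(x).
\end{align}
The inner sum is the total deterministic payoff on the outcome $x$: it equals $0$ when $x=0^n$, and when $\wt(x)=k>0$ it equals $k\bigl(2-\tfrac1k\bigr)+2(n-k)=2n-1$, independently of $k$ --- this is exactly the cost-sharing computation already carried out in the proof of \cref{eq:payoff_MVD_symmetricNE}. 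Hence
\begin{align}
\sum_{a=1}^n \$_a(\theta,\phi) = (2n-1)\bigl(1-p_{\theta,\phi}(0^n)\bigr) \le 2n-1,
\end{align}
since $0\le p_{\theta,\phi}(0^n)\le 1$.

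Next I would feed this into \cref{def:pareto_optimality}. Fix a player $i$ and any profile $t$ with $\$_i(t)>\$_i(s)=2-\tfrac1n$. Combining the strict inequality with the upper bound just established,
\begin{align}
\sum_{j\ne i}\$_j(t) = \sum_{a=1}^n \$_a(t) - \$_i(t) < (2n-1)-\Bigl(2-\tfrac1n\Bigr) = \sum_{j\ne i}\$_j(s),
\end{align}
where the last equality uses $\$_j(s)=2-\tfrac1n$ for every $j$, so $\sum_{j\ne i}\$_j(s)=(n-1)\bigl(2-\tfrac1n\bigr)=2n-3+\tfrac1n$. A finite sum being strictly smaller than another forces at least one summand to be strictly smaller, so there exists $j\in[n]\setminus\{i\}$ with $\$_j(t)<\$_j(s)$. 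This is precisely the defining condition of Pareto optimality, so $s$ is Pareto optimal.

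The argument is genuinely short, so I do not anticipate a real obstacle; the one point requiring care is the bookkeeping for the total deterministic payoff --- namely observing that, under cost sharing, \emph{every} outcome with at least one volunteer yields the same total $2n-1$, so the entire dependence of $\sum_a\$_a$ on the strategies collapses into the single scalar $p_{\theta,\phi}(0^n)$. One should also sanity-check the edge case $\wt(x)=n$ in the identity $\sum_a\$_a^{\mathrm{VD}}(x)=2n-1$, but this is immediate from \cref{eq:VD_payoff}.
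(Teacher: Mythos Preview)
Your proof is correct and rests on the same core observation as the paper's: the total payoff $\sum_{a=1}^n \$_a(t)$ is bounded above by $2n-1$ for every strategy profile $t$, with equality at $s$. The paper, however, reaches this bound by a longer route --- it argues by contradiction, assumes $\$_j(t)\geq 2-\tfrac1n$ for all $j\neq i$, and then expands $\sum_j \$_j(t)$ using the simplified payoff formula of \cref{lma3}, manipulating the double sums over $j$ and $x$ until the total collapses to $(2n-1)\sum_{x\neq 0^n}p_t(x)$. Your version is more direct on two counts: you avoid contradiction entirely, and you compute the total payoff straight from the definition \cref{eq:expected_quantum_payoffs} by observing that $\sum_a \$_a^{\mathrm{VD}}(x)=(2n-1)[x\neq 0^n]$ --- a fact the paper itself already used in proving \cref{eq:payoff_MVD_symmetricNE}. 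This sidesteps any need for \cref{lma3} and makes transparent that the dependence of the total payoff on $t$ lives entirely in the single probability $p_t(0^n)$.
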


\begin{proof}
To prove that $s$ is Pareto optimal, it suffices to show that for any player $i \in[n]$ and any strategy profile $t \in \Theta^n$, if $\$_i(t) > \$_i(s)$, then there exists another player $j \in [n]\backslash\{i\}$ such that $\$_j(t) < \$_j(s)$.

To establish this, fix a player $i\in [n]$ and a strategy profile $t \in \Theta^n$ such that $\$_i (t) > \$_i(s)$, i.e.,
\begin{align}
    2-\frac 1n < \$_i(t).
\label{eq:pareto_optimality_i}
\end{align}
Suppose, for the sake of contradiction, that there does not exist any other player $j \in [n]\backslash\{i\}$ such that $\$_j(t) < \$_j(s)$. By de Morgan's laws for quantifiers, this assumption is equivalent to $\$_j(t) \geq \$_j(s)$ for all $j \in [n]\backslash \{i\}$, i.e.,
\begin{align}
    2 - \frac 1n \leq \$_j(t), \quad \forall j \in [n]\backslash\{i\}. \label{eq:pareto_optimality_j}
\end{align}

Taking the sum of \cref{eq:pareto_optimality_i} and the sum over all $j \neq i$ of \cref{eq:pareto_optimality_j}, we obtain
\begin{align}
    n\left(2-\frac{1}{n}\right) < \sum_{j=1}^n \$_j(t),
\end{align}
which implies that
\begin{align}
    2n - 1 &< \sum_{j=1}^n 
    \raisebox{-3mm}{$\left[\rule{0em}{11mm}\right.$}
    2\sum_{\substack{x \in \{0,1\}^n \\ x \neq 0^n \\ x_j = 0 }} p_t(x) + \sum_{k=1}^n \left(2-\frac{1}{k}\right) \sum_{\substack{x \in \{0,1\}^n \\ x_j = 1 \\ \wt(x) = k}} p_t(x)
    \raisebox{-3mm}{$\left.\rule{0em}{11mm}\right]$}
    \nonumber\\
&= 2 \underbrace{\sum_{j=1}^n \sum_{\substack{x \in \{0,1\}^n \\ x \neq 0^n \\ x_j = 0 }} p_t(x)}_{\circled{1}} + \underbrace{\sum_{k=1}^n \left(2-\frac{1}{k}\right) \sum_{j=1}^n \sum_{\substack{x \in \{0,1\}^n \\ x_j = 1 \\ \wt(x) = k}} p_t(x)}_{\circled{2}}.
\label{eq:pareto_optimality_intermediate_1}
\end{align}
Now, the first sum in \cref{eq:pareto_optimality_intermediate_1} simplifies to
\begin{align}
    \circled{1} &= 
     \sum_{j=1}^n \sum_{\substack{x \in \{0,1\}^n \\ x \neq 0^n \\ x_j = 0 }} p_t(x) =  \sum_{j=1}^n \sum_{\substack{x \in \{0,1\}^n \\ x \neq 0^n  }} \bar{x}_jp_t(x) = \sum_{\substack{x \in \{0,1\}^n \\ x \neq 0^n  }} \sum_{j=1}^n  \bar{x}_jp_t(x) 
    = \sum_{\substack{x \in \{0,1\}^n \\ x \neq 0^n  }} \wt(\bar{x})p_t(x) \nonumber\\
    &=  \sum_{k=1}^n \sum_{\substack{x \in \{0,1\}^n 
    \\ \wt(x)=k }} \wt(\bar{x})p_t(x) 
     =  \sum_{k=1}^n \sum_{\substack{x \in \{0,1\}^n \\ \wt(x)=k }} (n-k)p_t(x) 
     =  \sum_{k=1}^n (n-k) \sum_{\substack{x = \{ 0,1\}^n \\ \wt(x) = k}} p_t(x),
     \label{eq:circled1simplified}
\end{align}
and the second sum in \cref{eq:pareto_optimality_intermediate_1} simplifies to
\begin{align}
    \circled{2} &= 
    \sum_{k=1}^n \left(2-\frac{1}{k}\right) \sum_{j=1}^n \sum_{\substack{x \in \{0,1\}^n \\ x_j = 1 \\ \wt(x) = k}} p_t(x) =  \sum_{k=1}^n \left(2-\frac{1}{k}\right) \sum_{j=1}^n \sum_{\substack{x \in \{0,1\}^n  \\ \wt(x) = k}} x_j p_t(x) 
= \sum_{k=1}^n \left(2-\frac{1}{k}\right) \sum_{\substack{x \in \{0,1\}^n  \\ \wt(x) = k}} \sum_{j=1}^n  x_jp_t(x) 
\nonumber\\&= \sum_{k=1}^n \left(2-\frac{1}{k}\right) \sum_{\substack{x \in \{0,1\}^n  \\ \wt(x) = k}} \wt(x)p_t(x) =\sum_{k=1}^n (2k-1) \sum_{\substack{ x \in \{0,1\}^n \\ \wt(x) = k }} p_t(x). \label{eq:circled2simplified}
\end{align}
Substituting \cref{eq:circled1simplified,eq:circled2simplified} back into \cref{eq:pareto_optimality_intermediate_1} yields:    
\begin{align}
2n-1 &< 2\sum_{k=1}^n (n-k) \sum_{\substack{x = \{ 0,1\}^n \\ \wt(x) = k}} p_t(x) + \sum^n_{k=1} (2k-1) \sum_{\substack{ x \in \{0,1\}^n \\ \wt(x) = k }} p_t(x)
    \nonumber\\&= (2n-1) \sum^n_{k=1} \sum_{\substack{ x \in \{0,1\}^n \\ \wt(x) = k }}
    p_t(x) = (2n-1) \sum_{\substack{ x \in \{0,1\}^n \\ x \neq 0}} p_t(x)
    \nonumber\\ 
    &\leq (2n-1)\sum_{x \in \{0,1\}^n} p_t(x) 
    \nonumber\\ 
    &= 2n-1 ,
\end{align}
which is a contradiction, since $2n-1$ cannot be less than itself.

Therefore, there must exist another player $j \in [n]\backslash\{i\}$ for which $\$_j(t) < \$_j(s)$. This concludes the proof that $s$ is Pareto optimal. 
\end{proof}

A straightforward corollary of \cref{thm:generalpareto} is that those strategy profiles $Q^n$ and $A^n$ where each player's payoff is $2-1/n$ are Pareto optimal. We state this formally as follows:
\begin{corollary}
In the $n$-player quantum volunteer's dilemma $G^{(n)}_{\mathrm{QVD}}$, the following strategy profiles are Pareto optimal:
\begin{enumerate}
\item $Q^n$, for all $n \ge 2$,
\item $A^n$, for all even $n$.
    \end{enumerate}
\end{corollary}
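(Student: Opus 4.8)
The plan is to read this corollary off \cref{thm:generalpareto}, which asserts that \emph{any} strategy profile $s\in\Theta^n$ satisfying $\$_a(s) = 2-\frac1n$ for all $a\in[n]$ is Pareto optimal. Since that theorem imposes no structural hypothesis on $s$ beyond this payoff condition, the only work left is to verify that each of the two listed profiles meets it, and then to quote \cref{thm:generalpareto} once for each.

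For part a), \cref{thm:QVD_main} establishes that for every $n\ge 2$ the strategy profile $Q^n$ yields $\$_a(Q^n) = 2-\frac1n$ for each player $a\in[n]$. Taking $s = Q^n$ in \cref{thm:generalpareto} then gives that $Q^n$ is Pareto optimal. For part b), \cref{theorem10} shows that when $n$ is even the strategy profile $A^n$ yields $\$_a(A^n) = 2-\frac1n$ for each $a\in[n]$ (the Iverson bracket in \cref{eq:payoff_An} evaluating to $1$ in this case). Applying \cref{thm:generalpareto} with $s = A^n$ then establishes Pareto optimality of $A^n$ for even $n$.

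There is essentially no obstacle here: all the substance is already contained in \cref{thm:generalpareto} and the payoff computations of \cref{thm:QVD_main,theorem10}, so the proof is a two-line invocation. The one subtlety worth flagging is that \cref{theorem10} only yields the payoff $2-\frac1n$ in the even case — for odd $n$, the profile $A^n$ leaves every player with payoff $0$ — which is exactly why part b) is restricted to even $n$; deciding Pareto optimality of $A^n$ for odd $n$ would require a separate argument not needed for this corollary.
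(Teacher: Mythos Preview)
Your proposal is correct and matches the paper's own proof essentially line for line: the paper also simply notes that \cref{eq:payoff_Qn,eq:payoff_An} (i.e., the payoff statements in \cref{thm:QVD_main,theorem10}) give $\$_a(s)=2-\tfrac1n$ for the listed profiles and then invokes \cref{thm:generalpareto}. Your remark about the odd-$n$ case of $A^n$ is accurate but, as you say, not needed here.
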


\begin{proof}
From \cref{eq:payoff_Qn,eq:payoff_An}, any strategy profile $s \in \{Q^n:n\geq 2\} \cup \{A^n:n \mbox{ even}\}$ satisfies $\$_a(s) = 2-\frac1n$ for each player $a \in [n]$. Thus, by Theorem \ref{thm:generalpareto}, these strategy profiles are Pareto optimal.
\end{proof}

\section{Conclusion}
\label{sec:conclusion}

In this study, we explored a quantum generalization of the classical volunteer's dilemma by incorporating quantum strategies, utilizing the quantization framework introduced by Eisert, Wilkens, and Lewenstein \cite{eisert1999quantum}. Our analysis of the quantum volunteer's dilemma with multiple players reveals key insights into the strategic advantages of decision-making in a quantum context.

We derived explicit analytical expressions for the expected payoffs of players, demonstrating that the quantum version of the game offers crucial advantages over its classical counterpart. Specifically, the quantum volunteer's dilemma features symmetric Nash equilibria---$Q^n$ for $n\leq 9$ and $A^n$ for even $n$---which yield higher expected payoffs compared to the unique symmetric Nash equilibrium of the classical game, where players employ mixed strategies. Notably, these Nash equilibria we identified are also Pareto optimal.

Our findings contribute to a deeper theoretical understanding of how quantum mechanics can impact strategic interactions in game theory. By highlighting the benefits of quantum strategies, this work paves the way for exploring their practical implementations and potential applications in diverse real-world contexts.

\subsection*{Acknowledgements}

\small{D.E.K.\ acknowledges funding support from the National Research Foundation, Singapore and the Agency for Science, Technology and Research (A*STAR) under its Quantum Engineering Programme (NRF2021-QEP2-02-P03), A*STAR C230917003, and A*STAR under the Central Research Fund (CRF) Award for Use-Inspired Basic Research (UIBR).
}

\subsection*{Competing interests}
\small{
The authors declare no competing interests.}

\subsection*{Data Availability}
\small{
Data sharing is not applicable to this article as no datasets were generated or analyzed during the current study.}

\bibliographystyle{unsrt}
\bibliography{reference}
\end{document}